\newcommand{\beq}{\begin{equation}}
\newcommand{\eeq}{\end{equation}}
\newif\if@restonecol  
\renewcommand{\Pr}{\mathrm{Pr}}
\newcommand{\nGamma}{\bm{\mathit{\Gamma}}}
\newcommand{\rl}{\mathrm{rl}}
\newcommand{\los}{\mathrm{los}}
\newcommand{\bigO}{\mathcal{O}}
\newcommand{\proofend}{$\hfill\blacksquare$\par}
\newcommand{\sensmap}{mapping of the received signals}
\newcommand{\sensmapthe}{mapping of the received signals}
\newcommand{\config}{beamformer pattern}
\newcommand{\configmat}{control matrix}
\newcommand{\sensfunc}{mapping of the received signals}
\newcommand{\mselem}{reconfigurable element}
\newcommand{\reflect}{\mathrm{ref}}
\newcommand{\newstate}{configuration}
\newif\ifshowannote
\newcommand{\LRT}[2]{%
  \mathrel{\mathop\gtrless\limits^{#1}_{#2}}%
}
\newtheorem{proposition}{\bf Proposition}
\newtheorem{theorem}{\bf Theorem}
\newtheorem{lemma}{\bf Lemma}
\newtheorem{proof}{Proof}
\newcounter{optcnt}
\newcommand{\revised}{\textcolor[rgb]{0,0,0}}
\begin{document}

\title{\huge{MetaSensing: Intelligent Metasurface Assisted RF 3D Sensing by Deep Reinforcement Learning}}

\author{
\IEEEauthorblockN{
\normalsize{Jingzhi~Hu},~\IEEEmembership{\normalsize Graduate~Student~Member,~IEEE},
\normalsize{Hongliang~Zhang},~\IEEEmembership{\normalsize Member,~IEEE},
\normalsize{Kaigui~Bian},~\IEEEmembership{\normalsize Senior~Member,~IEEE},
\normalsize{Marco~Di~Renzo},~\IEEEmembership{\normalsize Fellow,~IEEE},
\normalsize{Zhu~Han},~\IEEEmembership{\normalsize Fellow,~IEEE},
and~\normalsize{Lingyang~Song},~\IEEEmembership{\normalsize Fellow,~IEEE}
}
\thanks{
 J. Hu and L. Song are with Department of Electronics, Peking University.~(email: \{jingzhi.hu, lingyang.song\}@pku.edu.cn)
 }
 \thanks{
H. Zhang is with Department of Electrical Engineering, Princeton University.~(email: hongliang.zhang92@gmail.com)
}
\thanks{K. Bian is with Department of Computer Science, Peking University.~(email: bkg@pku.edu.cn)}
\thanks{
M. Di Renzo is with Universit\'e Paris-Saclay, CNRS and CentraleSup\'elec, Laboratoire des Signaux et Syst\`emes,  91192 Gif-sur-Yvette, France.~(email: marco.direnzo@centralesupelec.fr)
}
\thanks{
Z. Han is with Electrical and Computer Engineering Department, University of Houston, and also with Department of Computer Science and Engineering, Kyung Hee University.~(email: hanzhu22@gmail.com)
}
}

\maketitle
\begin{abstract}
Using RF signals for wireless sensing has gained increasing attention.
However, due to the unwanted multi-path fading in uncontrollable radio environments, the accuracy of RF sensing is limited.
Instead of passively adapting to the environment, in this paper, we consider the scenario where an intelligent metasurface is deployed for sensing the existence and locations of 3D objects.
By programming its {\config}s, the metasurface can provide desirable propagation properties.
However, achieving a high sensing accuracy is challenging, since it requires the joint optimization of the {\config}s and mapping of the received signals to the sensed outcome.
To tackle this challenge, we formulate an optimization problem for minimizing the cross-entropy loss of the sensing outcome, and propose a deep reinforcement learning algorithm to jointly compute the optimal {\config}s and the {\sensmap}.
Simulation results verify the effectiveness of the proposed algorithm and show how the sizes of the metasurface and the target space influence the sensing accuracy.
\end{abstract}

\begin{IEEEkeywords}
RF 3D sensing, metasurface, deep reinforcement learning, policy gradient algorithm, beamformer pattern design.
\end{IEEEkeywords}

\section{Introduction}

Recently, leveraging widespread radio-frequency~(RF) signals for wireless sensing applications has attracted growing research interest.
Different from methods based on wearable devices or surveillance cameras, RF sensing techniques need no direct contact with the sensing targets~\cite{Kianoush2017Device}.
The basic principle behind RF sensing is that the influence of the target objects on the propagation of wireless signals can be potentially recognized by the receivers~\cite{Lee2009Wireless}.
RF sensing techniques can be widely applied to many scenarios of daily life, such as surveillance~\cite{He2006Vigilnet}, crowd sensing~\cite{Ota2018QUOIN}, ambient assisted living~\cite{cook2009assessing}, and remote health monitoring~\cite{amin2016radar}.
In these applications, it is crucial to have high sensing accuracies.

Many RF-based sensing methods based on WiFi signals or millimeter wave signals have been proposed for sensing and recognizing human being and objects.
In~\cite{zhang2019Feasibility}, the authors designed an RF sensing system that can detect the location and type of moving objects by using Wi-Fi signals.
In~\cite{jiang2018towards}, the authors proposed a deep learning based RF sensing framework that can remove environmental and subject-specific information and can extract environmental/subject-independent features contained in the sensing data.
In~\cite{Hsu2019Enabling}, the authors designed a low-power RF sensing system that automatically collects behavior patterns of people.

In addition, using RF sensing to capture human beings and indoor scenes has being explored.
In~\cite{adib2015capturing,zhao2018rf}, the authors used wide-band RF transceivers with multiple-input-multiple-output~(MIMO) antennas to capture images of human skeletons and showed that it is possible to reconstruct the human skeleton even when the RF signals are blocked by walls.
In~\cite{PedrossEngel2018Orthogonal}, the authors proposed to use mutually orthogonally coded millimeter wave signals to image the scenes including human beings and objects.
However, using RF signals for sensing usually encompasses a signal collection and analysis process which passively accept the radio channel environment.
The radio environment is unpredictable and usually unfavorable, and thus the sensing accuracy of conventional RF sensing methods is usually affected by unwanted multi-path fading~\cite{honma2018Human, li2020programmable}, and/or unfavorable propagation channelsd from the RF transmitters to the receivers.

Intelligent metasurfaces have been proposed as a promising solution for turning unwanted propagation channels into favorable ones~\cite{Renzo2019Smart, Gacanin2020Wireless}.
A metasurface is composed of a large number of electrically reconfigurable elements, which applies different phase-shifts on the RF signals that impinge upon it~\cite{Basar2019Generalization, Zhang2020Reflective}.
By programming the reconfigurable elements, a metasurface deployed in the environment can change the RF propagation channel and create favorable signal beams for sensing\cite{ElMossallamy2020Reconfigurable}.
We refer to the codings of the {\mselem}s as the \emph{{\config}s}.
Through dynamically designing the {\config}s, a metasurface can actively control the RF signal beams in the sensing process, which potentially improves the sensing accuracy.
Instead of employing complex and sophisticated RF transmitters and receivers~\cite{Hashida2020Intelligent}, metasurface assisted RF sensing paves a new way of developing RF sensing methods, which have the capabilities of controlling, programming, and hence customizing the wireless channe.
In literature, the authors of~\cite{Li2019Machine} explored the use of metasurfaces to assist RF sensing and obtain $2$D images for human beings.
Besides, in~\cite{Zhang2020Towards}, the authors proposed a metasurface assisted RF system to obtain localization of mobile users.
Nevertheless, no research works have tackled the analysis and design of metasurface assisted $3$D RF sensing, which is more challenging to analyze and optimize than $2$D RF sensing.

In this paper, we consider a metasurface assisted RF $3$D sensing scenario, which can sense the existence and locations of $3$D objects in a target space.
Specifically, by programming the {\config}s, the metasurface performs beamforming and provides desirable RF propagation properties for sensing.
However, there are two major challenges in obtaining high sensing accuracy in metasurface assisted RF sensing scenarios.
\begin{itemize}[leftmargin=*]
\item \emph{First}, the {\config}s of the metasurface need to be carefully designed to create favorable propagation channels for sensing. 
\item \emph{Second}, the \emph{\sensmapthe}, i.e., the mapping from the signals received at the RF receiver to the sensing results of the existence and locations of the objects, needs to be optimized as well.
\end{itemize}
Nevertheless, the complexity of finding the optimal {\config}s is extremely high because the associate optimization problem is a discrete nonlinear programming with a large number optimization variables.
Besides, the optimization of the {\config}s and the {\sensmap} are closely coupled together, which makes optimizing the sensing accuracy in metasurface assisted RF sensing scenarios even harder.

To tackle these challenges, we formulate an optimization problem for  sensing accuracy maximization by minimizing the cross-entropy loss of the sensing results with respect to the {\config}s and the {\sensmap}.
In order to solve the problem efficiently, we formulate a Markov decision process~(MDP) for the optimization problem and propose a deep reinforcement learning algorithm.
The proposed deep reinforcement learning algorithm is based on the policy gradient algorithm~\cite{sutton1998reinforcement} and is referred to as the progressing reward policy gradient~(PRPG) algorithm, since the reward function of the MDP is consistently being improved during the learning process.
The computational complexity and the convergence of the proposed algorithm are analyzed.
Moreover, we derive a non-trivial lower-bound for the sensing accuracy for a given set of {\config}s of the metasurface.
Simulation results verify the effectiveness of the proposed algorithm and showcase interesting performance trends about the sensing accuracy with respect to the sizes of the metasurface and the target space.
In particular, the contributions of this paper can be summarized as follows.
\begin{itemize}[leftmargin=*]
    \item  We consider a metasurface assisted RF sensing scenario which can sense the existence and locations of objects in a $3$D space. 
		Then, we formulate an optimization problem to minimize the cross-entropy loss of the sensing results through optimizing the {\config}s and the {\sensmap}.  To this end, we adopt a MDP-based framework.
    \item We propose a deep reinforcement learning algorithm named PRPG to solve the formulated MDP.
		The complexity and the convergence of the proposed algorithm are analyzed, and a non-trivial lower-bound for the sensing accuracy is derived.
	\item We use simulation results to verify that the proposed algorithm outperforms other benchmark algorithms in terms of training speed and sensing accuracy. 
		The simulation results unveil trends about the sensing accuracy as a function of the sizes of the metasurface and the target space, which gives insights on the implementation of practical metasurface assisted RF sensing systems.
\end{itemize}

The rest of this paper is organized as follows. 
In Section~\ref{sec: overview}, we introduce the model of the metasurface assisted RF sensing scenario. 
In Section~\ref{sec: problem formulation}, we formulate the optimization problem to optimize the sensing accuracy by minimizing the cross-entropy loss of the sensing results.
In Section~\ref{sec: algorithm design}, we formulate an MDP for the optimization problem and then proposed the PRPG algorithm to solve it.
In Section~\ref{algorithm analysis}, the complexity and convergence of the PRPG algorithm are analyzed, and a lower-bound for the sensing accuracy is derived.
Simulation results are provided in Section~\ref{sec: simulation result} and conclusions are drawn in Section~\ref{sec: conclu}.

\section{System Model}
\label{sec: overview}
In this section, we introduce the metasurface assisted $3$D RF sensing scenario, which is illustrated in Fig.~\ref{fig: sys mod}.
In this scenario, there exist a pair of single-antenna RF transceivers, a metasurface, and a target space where the objects are located.
The metasurface reflects and modifies the incident narrow-band signals at a certain frequency $f_c$.
The Tx unit and Rx unit of the transceiver keep transmitting and receiving at $f_c$.
The target space is a cubical region that is discretized into $M$ equally-sized \emph{space grids}.
Each space grid is of size $\Delta l_x \times \Delta l_y \times \Delta l_z$.

The sensing process adopted in the considered scenario can be briefly described as follow.
The signals transmitted by the Tx unit are reflected and modified by the metasurface before entering into the target space.
The modified signals are further reflected by the objects in the target space and received by the Rx unit.
Then, the Rx unit maps the received signals to the sensing result, which indicates whether an object exists in each space grid.

In the following, we introduce the metasurface model in Subsection~A, the channel model accounting for the metasurface in Subsection~B, and the sensing protocol in Subsection~C.

\begin{figure}[!t]
\center{\includegraphics[width=0.5\linewidth]{./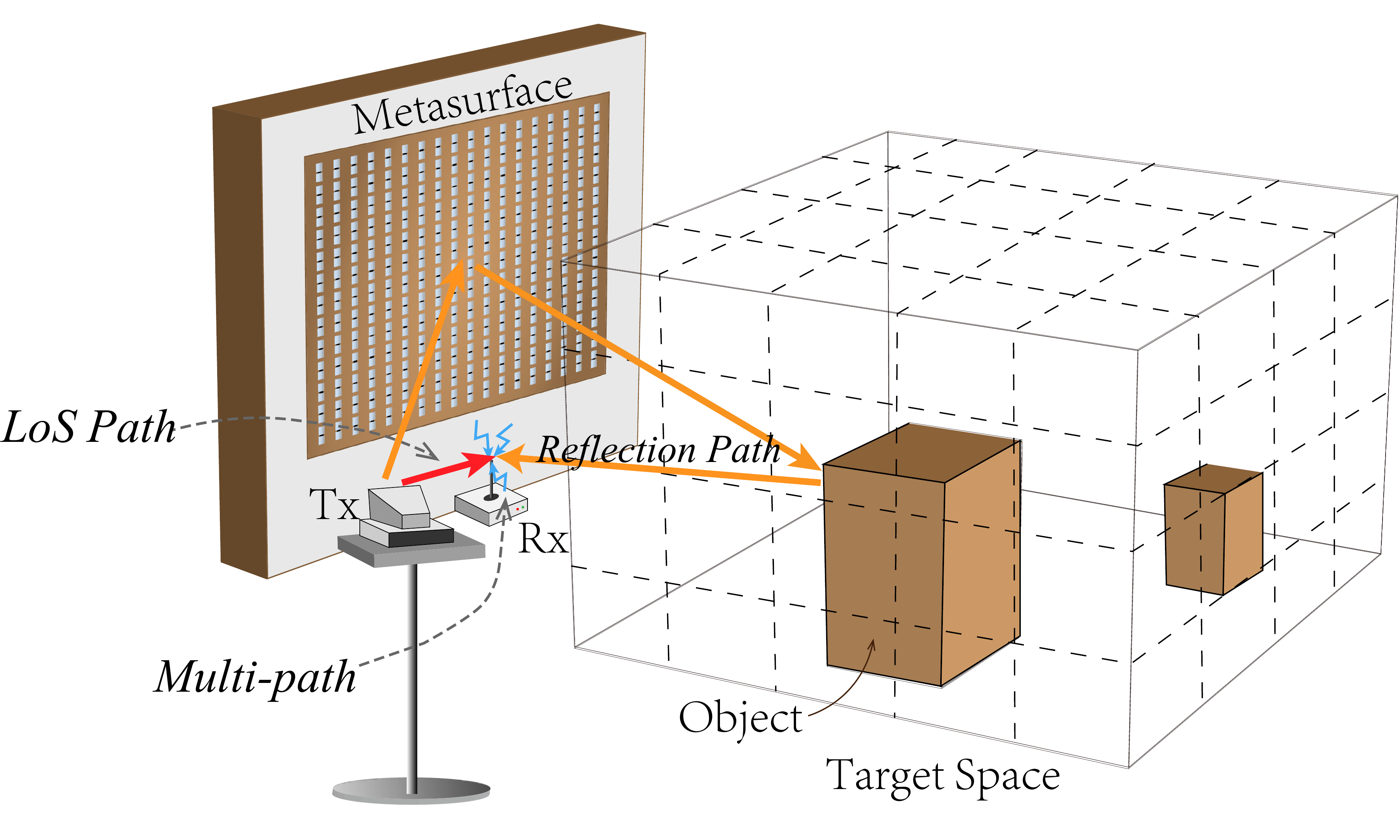}}
	\vspace{-0.5em}
	\setlength{\belowcaptionskip}{-1.em} 
	\caption{Illustration of the metasurface assisted RF sensing scenario.}
	\vspace{-1.em}
	\label{fig: sys mod}
\end{figure}

\subsection{Metasurface Model}
\label{ssec: meta model}
A metasurface is an artificial thin film of electromagnetic reconfigurable materials, which is composed of uniformly distributed {\mselem}s~\cite{Liu2019Reconfigurable}.
As shown in Fig.~\ref{fig: sys mod}, the {\mselem} of the metasurface are arranged in a two-dimensional array. 
By controlling the positive-intrinsic-negative~(PIN) diodes coupled with each {\mselem}, the {\mselem} can adjust its electromagnetic response to the incident RF signals.
\revised{For each {\mselem}, we refer to the different responses to incident RF signals as the {\mselem}'s \emph{\newstate} as in~\cite{Dai2020Reconfigurable}.
By changing the {\newstate} of each {\mselem}, the metasurface is able to modify the reflected signals and perform beamforming~\cite{di2019hybrid}.}

We assume that each {\mselem} has $N_S$ {\newstate}s, and each {\newstate} of an element has a unique reflection coefficient for the incident RF signals.
To be specific, we assume that each row and column of the metasurface contain the same number of {\mselem}s, and the total number of {\mselem}s is denoted by $N$. 
Based on~\cite{our_ris_work}, we denote the reflection coefficient of the $n$-th {\mselem} corresponding to the incident signal from the TX unit and the reflected signal towards the $m$-th space grid by $r_{n,m}(c_n)$.
Here, $c_n\in[1,N_S]$ denotes the {\newstate} of the $n$-th {\mselem} and $c_n\in \mathbb Z$, where $\mathbb Z$ denotes the set of integers.

\subsection{Channel Model}
\label{ssec: channel model}
In the metasurface assisted RF sensing scenario, the Tx unit and Rx unit adopt single antennas to transmit and receive RF signals.
The Tx antenna is a directional antenna, which points towards the metasurface so that most of the transmitted signals are reflected by the metasurface and propagate into the target space.
The signals reflected by the metasurface are reflected by the objects in the target space and then reach the Rx antenna.
The Rx antenna is assumed to be omni-directional and located right below the metasurface, as shown in Fig.~\ref{fig: sys mod}.
This setting ensures that the signals reflected by the metasurface are not  directly received by the Rx antenna, and thus most of the received signals contain the information of the objects in the target space.

As shown in Fig.~\ref{fig: sys mod}, the transmission channel from the Tx antenna to the Rx antenna is composed of three types of paths, i.e., the line-of-sight~(LoS) path, the reflection paths, and the environmental scattering paths.
The LoS path indicates the direct signal path from the Tx antenna to the Rx antenna.
The reflection paths are the paths from the Tx antenna to the Rx antenna via the reflections from the metasurface and the objects in the target space.
The environmental scattering paths account for the signals paths between the Tx antenna and the Rx antenna which involve complex reflection and scattering in the surrounding environment.
Then, the equivalent baseband representation of the received signal containing the signals from all these three types of paths is denoted by $y$ and can be expressed as
\beq
\label{equ: received signal}
y= h_{\mathrm{los}}\cdot \sqrt{P}\cdot x + \sum_{m=1}^M\sum_{n=1}^N h_{n,m}(c_n,\nu_m) \cdot \sqrt{P} \cdot x + h_{\rl}\cdot \sqrt{P} \cdot x +\sigma,
\eeq
where $P$ is the transmit power, and $x$ denotes the transmitted symbol. 

The component terms of (\ref{equ: received signal}) can be explained in detail as follows.
The first term, i.e., $h_{\mathrm{los}}\cdot P\cdot x$, corresponds to the signal received in the LoS path, where $h_\los$ denotes the gain.
Based on~\cite{goldsmith2005wireless}, $h_\los$ can be expressed as
\beq
h_\los = \frac{\lambda}{4\pi} \cdot \frac{\sqrt{g_{T}g_{R}}\cdot e^{-j2\pi d_{\mathrm{los}}/\lambda}}{d_{\mathrm{los}}},
\eeq
where $\lambda$ is the wavelength of the signal, 
$g_{T}$ and $g_{R}$ denote the gains of the Tx and Rx antennas, respectively,
and $d_{\mathrm{los}}$ is the distance from the Tx antenna to the Rx antenna.

The second term in~(\ref{equ: received signal}) corresponds to the signals that reach the Rx antenna via $N\cdot M$ reflection paths.
\revised{In the second term, $h_{n,m}(c_n, \nu_m)$ denotes the gain of the reflection path via the $n$-th {\mselem} in {\newstate} $c_n$ and the $m$-th space grid with reflection coefficient $\nu_m$.}
Based on~\cite{tang2019wireless,di2019hybrid}, $h_{n, m}(c_n, \nu_m)$ can be formulated as follows
\beq
\label{equ: main channel gain}
h_{n, m}(c_n, \nu_m) \!=\! \frac{\lambda^2 \!\cdot\! r_{n,m}(c_n)\!\cdot\! \nu_{m}\!\cdot\!  \sqrt{g_{T}g_{R}}\!\cdot\! e^{-j2\pi (d_n+d_{n,m})/\lambda}}{(4\pi)^2\cdot d_{n} \cdot d_{n,m}},
\eeq
where $d_n$ denotes the distance from the Tx antenna to the $n$-th {\mselem} and $d_{n,m}$ denotes the distance from the $n$-th {\mselem} to the Rx antenna via the center of the $m$-th space grid.

Finally, the third and forth terms in~(\ref{equ: received signal}) correspond to the signals from the environmental scattering paths and the additive noise at the Rx antenna, respectively.
The symbol $h_{\rl}\in\mathbb C$ denotes the equivalent gain of all the environmental scattering paths, and $\sigma$ is a random signal that follows the complex normal distribution, $\sigma\sim\mathcal{CN}(0,\epsilon)$ with $\epsilon$ being the power of the noise.

Moreover, we refer to the vector of {\newstate}s selected for the $N$ {\mselem}s as a \emph{\config} of the metasurface, which can be represented by a $N\times N_S$-dimensional binary row vector $\bm c = (\hat{\bm o}(c_1), ..., \hat{\bm o}(c_N))$.
Specifically, $\hat{\bm o}(i)$~($\forall i\in[1,N_S]$) denotes the $N_S$-dimensional row vector whose $i$-th element is $1$ and the other elements are $0$.
Based on the definition of the {\config}, the received signal in~(\ref{equ: received signal}) can be reformulated as
\beq
\label{equ: received signal matrix form}
y= h_{\mathrm{los}}\cdot \sqrt{P}\cdot x + \bm c\bm A\bm \nu \cdot \sqrt{P} \cdot x + h_{\rl}\cdot \sqrt{P} \cdot x +\sigma,
\eeq
where $\bm \nu=(\nu_1, \dots, \nu_M)$ denotes the vector of reflection coefficients of the $M$ space grids, $\bm A = (\bm \alpha_1,\dots, \bm \alpha_M)$ is referred to as the \emph{projection matrix}, and $\bm \alpha_m = (\hat{\bm \alpha}_{m,1}, \dots, \hat{\bm \alpha}_{m,N})^T$ with $\hat{\bm \alpha}_{m,n} = (\hat{\alpha}_{m,n,1},\dots,\hat{\alpha}_{m,n,N_S})$.
Here, for all $m\in[1,M]$, $n\in[1,N]$, and $i\in[1,N_S]$, $\hat{\alpha}_{m,n,i}$ denotes the channel gain of the reflection path via the $n$-th {\mselem} in {\newstate} $i$ and the $m$-the space grid with a unit reflection coefficient, which can be expressed as follows based on~(\ref{equ: main channel gain}).
\begin{equation}
\label{equ: added alpha def}
\hat{\alpha}_{m,n,i} = \frac{\lambda^2 \cdot r_{n,m}(i)\cdot \sqrt{g_T g_R}}{(4\pi)^2 d_n d_{n,m} } \cdot e^{-j2\pi (d_n+d_{n,m})/\lambda }.
\end{equation}

\subsection{RF Sensing Protocol}
\label{ssec: rf sensing protocol}
\begin{figure}[!t] 
\center{\includegraphics[width=0.7\linewidth]{./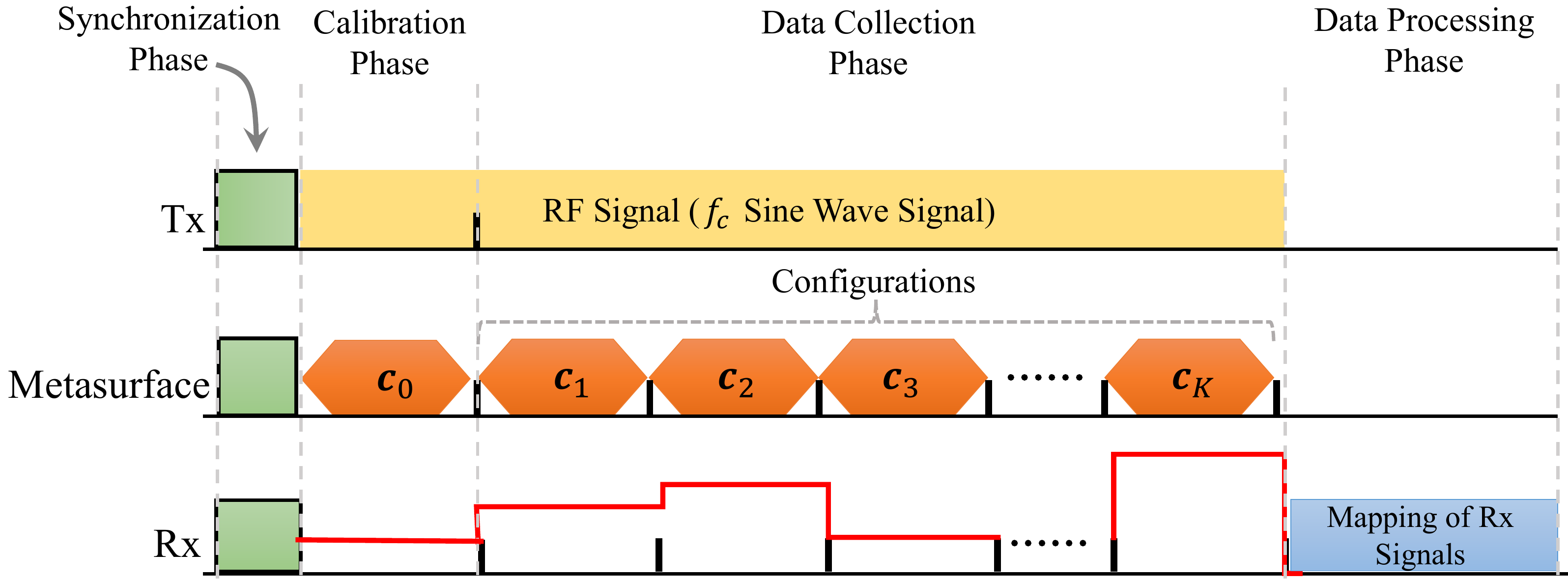}}
	\vspace{-0.5em}
	\setlength{\belowcaptionskip}{-1.em}   
	\caption{A cycle of the RF sensing protocol.}
	\label{fig: protocol}
	\vspace{-1.em}
\end{figure}

To describe the RF sensing process in the metasurface assisted scenario clearly, we formulate the following \emph{RF sensing protocol}.
In the protocol, the timeline is slotted and divided into \emph{cycles}, and the Tx unit, the Rx unit, and the metasurface operate in a synchronized and periodic manner.
As shown in Fig.~\ref{fig: protocol}, each cycle consists of four phases: a \emph{synchronization phase}, a \emph{calibration phase}, a \emph{data collection phase}, and a \emph{data processing phase}.
During the synchronization phase, the Tx unit transmits a synchronization signal to the metasurface and to the Rx unit, which identifies the start time of a cycle.

Then, in the calibration phase, the Tx unit transmits a narrow band constant signal, i.e., symbol $x$, at frequency $f_c$. 
The metasurface sets the {\config} to be $\bm c_0 = (\hat{\bm o}(1),\ldots,\hat{\bm o}(1))$, i.e., the $N$ {\mselem}s are in their first/default {\newstate}.
Besides, the received signal of the Rx unit is recorded as $y_0$.

The data collection phase is divided into $K$ \emph{frames} that are evenly spaced in time.
During this phase, the Tx unit continuously transmits the narrow band RF signal, while the metasurface changes its {\config} at the end of each frame.
As shown in Fig.~\ref{fig: protocol}, we denote the {\config}s of the metasurface corresponding to the $K$ frames by binary row vectors $\bm c_1,\ldots\bm c_K$.
Specifically, the $K$ {\config}s of the metasurface during the data collection phase constitutes the \emph{\configmat}, which is denoted by $\bm C = (\bm c_1^T,..., \bm c_K^T)^T$.
Besides, as $\bm c_k$  is a binary row vector, {\configmat} is a binary matrix.

To remove the signal form the LoS path which contains no information of the target space,  the received signals in the $K$ frames are subtracted by $y_0$.
The $K$ differences constitute the \emph{measurement vector},  which is a noisy linear transformation of $\bm \nu$ by the matrix $\nGamma$, i.e.,
\beq
\label{equ: measurement vector}
\tilde{\bm y} = \bm y - \bm y_0 = \nGamma \bm \nu + \tilde{\bm \sigma},
\eeq
where $\nGamma =\sqrt{P}\cdot x \cdot (\bm C - \bm C_0)\bm A $ with $\bm C_0 = (\bm c_0^T, \ldots, \bm c_0^T)^T$,
$\bm y$ is a $K$-dimensional vector consisting of the sampled received signals during the $K$ frames that can be calculated by~(\ref{equ: received signal matrix form}),
$\bm y_0$ is a $K$-dimensional vector with all the elements being $y_0$,
and $\tilde{\bm \sigma}$ is the difference between the noise signals and environmental scattering signals of $\bm y$ and $\bm y_0$.
\revised{In this article, we assume that the environment in the considered scenario is static or changing slowly. 
In this case, the signals from the environmental scattering paths, i.e., $h_{\mathrm{rl}}\cdot\sqrt{P}\cdot x$ is subtracted in~(\ref{equ: measurement vector}), and $\tilde{\bm \sigma}$ contains the difference between the Gaussian noise signals of $\bm y$ and $\bm y_0$.\footnote{
\revised{If the environment is changing rapidly, $h_{\mathrm{rl}}\cdot\sqrt{P}\cdot x$ can be considered as an additional complex Gaussian noise~\cite{zhang2019reconfigurable}, and $\tilde{\bm \sigma}$ in (\ref{equ: measurement vector}) is composed of the difference of the noise signals at the Rx and that of the environmental scattering signals, and thus its variance is $2\epsilon + 2\epsilon_{\mathrm{hl}}$.}
}
Specifically, the $k$-th element of $\tilde{\bm \sigma}$ is $\tilde{\sigma}_k\sim\mathcal{CN}(0, 2\epsilon)$.}
We refer to $\tilde{\bm y}$ as the \emph{measurement vector}.
Since $\nGamma$ determines how the reflection characteristics of the objects are mapped to the measurement vector, we refer to $\nGamma$ as the \emph{measurement matrix}.

Finally, during the data processing phase, the receiver maps the measurement vector obtained in the data collection phase to the sensing results, which is a vector indicating the probabilities that objects exist in the $M$ space grids.
Given {\configmat} $\bm C$, the mapping is modeled through a parameterized function, i.e., $\hat{\bm p} = \bm f^{\bm w}(\tilde{\bm y})$ with $\bm w$ being the parameter vector that is referred to as the \emph{\sensfunc}.
\revised{Moreover, the result of the mapping, i.e., $ \hat{\bm p}$, is an $M$-dimensional real-valued vector.
Specifically, its $m$-th element, i.e., $\hat{p}_{m}\in[0,1]$, indicates the probability that an object exists at the $m$-th space grid; therefore $(1-\hat{p}_{m})$ indicates the probability that the $m$-th space grid is empty.}

\section{Problem Formulation}
\label{sec: problem formulation}
In this section, we formulate the optimization problem for maximizing the sensing accuracy for the considered scenario.
We adopt the \emph{cross-entropy loss} as the objective function to measure the sensing accuracy, as minimizing the cross-entropy loss function can significantly improve the accuracy of classification and prediction~\cite{goodfellow2016deep}.
In other words, the sensing accuracy is inversely proportional to the cross-entropy loss.

\revised{
We define the cross-entropy loss in the considered scenario as 
\begin{align}
\label{equ: def of sensing loss}
L_{\mathrm{CE}} = 
    - \mathbb E_{ \bm \nu \in \mathcal V } \Big[
        & \sum_{m=1}^M  
            p_{m}(\bm\nu) \cdot \ln(\hat{p}_{m}) + (1-p_{m}(\bm\nu)) \cdot \ln(1-\hat{p}_{m}) 
    \Big], 
\end{align}
where $\mathcal V$ denotes the set of all possible reflection coefficient vectors corresponding to the existence of objects in the target space,
and $p_m(\bm \nu)$ is a binary variable indicating the object existence in the $m$-th space grid. 
Specifically, $p_m(\bm \nu)$ can be expressed as
\beq
\label{equ: prob vector for space grid}
p_{m}(\bm \nu) =
\begin{cases}
 	0,\quad \text{if $|\nu_m| = 0$},\\
 	1, \quad \text{otherwise}.
 \end{cases}
\eeq}

\revised{
In~(\ref{equ: def of sensing loss}), $\hat{\bm p}$ is determined by $\bm f^{\bm w}(\tilde{\bm y})$.
Generally, parameterized function $\bm f^{\bm w}(\tilde{\bm y})$ can take any form. 
For example, it can be a linear function, i.e., $\bm f^{\bm w}(\tilde{\bm y}) = \bm W \tilde{\bm y} + \bm w'$, where $\bm W$ and $\bm w'$ are determined by $\bm w$ and obtained by minimizing the mean squared error of the sensing results~\cite{Boyd_CONVEX}.
Besides, $\bm f^{\bm w}(\tilde{\bm y})$ can also be a nonlinear decision function, which determines the sensing results of $\tilde{\bm y}$ by using conditional probabilities~\cite{McDonough_SIGNAL}.
In this paper, we consider that $\bm f^{\bm w}(\tilde{\bm y})$ is nonlinear and modeled as a neural network, where the elements of $\bm w$ stand for the weights of the connections and the biases of the nodes.
We refer to the neural network for $\bm f^{\bm w}(\tilde{\bm y})$ as the \emph{sensing network}.
}

The optimization problem for the metasurface assisted scenario that maximizes the sensing accuracy can be formulated as the following cross-entropy minimization problem, where the {\configmat} and the {\sensmap} parameter are the optimization variables, i.e.,
\begin{align}
\text{(P1)}: \min_{\bm C, \bm w}~
& L_{\mathrm{CE}}(\bm C, \bm w),   \label{equ-obj func general opt} \\
s.t.~
&(\hat{p}_1,...,\hat{p}_M) =  \bm f^{\bm w}(\tilde{\bm y}),
\label{equ-1 const general opt}\\
&\tilde{\bm y} = \sqrt{P}\cdot x \cdot (\bm C - \bm C_0)\bm A + \tilde{\bm \sigma},
\label{equ-1.5 const general opt}\\
& \bm C = (\bm c_1^T,..., \bm c_K^T)^T,
\label{equ-2 const general opt}\\
& \bm c_{k} = (\hat{\bm o}(c_{k,1}), ..., \hat{\bm o}(c_{k,N})), ~\forall k\!\in\![1,K],
\label{equ-3 const general opt}\\
& c_{k,n} \in [1,N_S],~\forall k\!\in\![1,K], n\!\in\![1,N].
\label{equ-4 const general opt}
\end{align}
\revised{In (P1), (\ref{equ-obj func general opt}) indicates that the objective is to minimize the cross-entropy loss by optimizing $\bm C$ and $\bm w$. 
As $\hat{\bm p}$ is determined by $\bm f^{\bm w}(\tilde{\bm y})$ and $\tilde{\bm y}$ is determined by control matrix $\bm C$, $L_{\mathrm{CE}}$ defined in~(\ref{equ: def of sensing loss}) can be expressed as a function of $\bm C$ and $\bm w$.
Constraint~(\ref{equ-1 const general opt}) indicates that the probabilities for the $M$ space grids to contain objects are calculated by the {\sensfunc}, i.e., $\bm f^{\bm w}(\tilde{\bm y})$.
Constraint~(\ref{equ-1.5 const general opt}) indicates that the measurement vector is determined by {\configmat} $\bm C$ as in~(\ref{equ: measurement vector}).
Besides, constraints (\ref{equ-2 const general opt})$\sim$(\ref{equ-4 const general opt}) are due to the definition of the {\configmat} in Section~\ref{ssec: rf sensing protocol}.}
\revised{Since the {\configmat} is a binary matrix and $\bm w$ is a real-valued vector, (P1) is a mixed-integer optimization problem and is NP-hard.}

To tackle it efficiently, we decompose (P1) into two sub-problems, i.e., (P2), and (P3), as follows:
\begin{align}
\label{prob: p2}
&\text{(P2)}:
\min_{\bm w}~ L_{\mathrm{CE}}(\bm C, \bm w) ,\quad s.t.~\text{(\ref{equ-1 const general opt})}.\\
&\text{(P3)}: \min_{\bm C}~L_{\mathrm{CE}}(\bm C, \bm w), \quad s.t.~\text{(\ref{equ-1.5 const general opt}) to~(\ref{equ-4 const general opt})}.
\label{prob 3-3 const}
\end{align}
In (P2), we minimize the cross-entropy loss by optimizing $\bm w$ given $\bm C$, and in (P3), we minimize the cross-entropy loss by optimizing $\bm C$ given $\bm w$.
Based on the alternating optimization technique~\cite{bezdek2003convergence}, a locally optimal solution of~(P1) can be solved by iteratively solving (P2) and (P3).
Nevertheless, given $\bm w$, (P3) is still hard to solve due to the large number of integer variables in the {\configmat}.
Moreover, the number of iterations for solving~(P2) and~(P3) can be large before converging to the local optimum of~(P1).
If traditional methods, such as exhaustive search and branch-and-bound algorithms, are applied, they will result in a high computational complexity.
To solve~(P2) and~(P3) efficiently, we develop an MDP framework and solve it by proposing an PRPG algorithm, which are discussed in the next section. 
Furthermore, the convergence of the proposed algorithm to solve~(P1) is analyzed in Section~\ref{algorithm analysis}.

\section{Algorithm Design}
\label{sec: algorithm design}

In this section, we formulate an MDP framework for (P2) and (P3) in Subsection~A and propose a deep reinforcement learning algorithm named PRPG to solve it in Subsection~B. 

\subsection{MDP Formulation}
\label{ssec: mdp formulation}

\revised{In (P3), the optimization variable $\bm C$ is composed of a large number of binary variables satisfying constraints (\ref{equ-2 const general opt})$\sim$(\ref{equ-4 const general opt}), which makes (P3) an integer optimization problem which is NP-hard and difficult to solve.
Nevertheless, the metasurface can be considered as an intelligent agent who determines the {\newstate} of each {\mselem} for each {\config} sequentially, and is rewarded by the negative cross-entropy loss.
In this regard, the integer optimization problem (P3) can be considered as a decision optimization problem for the metasurface, which can be solved efficiently by the deep reinforcement learning technique, since it is efficient to solve highly-complexed decision optimization problems for intelligent agents~\cite{Volodymyr2015Human,Li2019Deep}.
As the deep reinforcement learning algorithm requires the target problem to be formulated as an MDP, we formulate~(P2) and~(P3) as an MDP, so that we can solve them by proposing an efficient deep learning algorithm.}

An MDP encompasses an environment and an agent, and consists of four components: the set of states $\mathcal S$, the set of available actions $\mathcal A$, the state transition function $\mathcal T$, and the reward function $\mathcal R$~\cite{sutton1998reinforcement}.
The states in $\mathcal S$ obey the Markov property, i.e., each state only depends on the previous state and the adopted action. 
Suppose the agent takes action $a$ in state $s$, and the consequent state $s'$ is given by the transition function $\mathcal T$, i.e., $\mathcal \bm s' = T(\bm s,a)$.
After the state transition, the agent receives a reward that is determined by reward function $\mathcal R$, i.e., $\mathcal R(\bm s',\bm s,a)$.

To formulate the MDP framework for (P2) and (P3), we view the metasurface as the agent, and the RF sensing scenario including the surroundings, the RF transceiver, and the objects in the target space are regarded, altogether, as the environment.
\revised{
We consider the state of the metasurface the current {\configmat}, i.e., $\bm C$ and the action of the metasurface as selecting the {\newstate} of a {\mselem} for a {\config}.
Thus, actions of the metasurface determine the elements in {\configmat} $\bm C$.}
Therefore, the next state of the MDP is determined by the current state and the action, and the Markov property is satisfied.
In the following, we describe the components of the MDP framework in detail.

\textbf{State}: In the MDP of the metasurface assisted RF sensing scenarios, the state of the environment is defined as
\beq
\label{equ: state def}
\bm s = (k, n, \bm C),
\eeq
where $k\in[1,K]$ and $n\in[1,N]$ are the row and column indexes for {\configmat} indicating the {\newstate} that the metasurface aims to select.
Besides, $\bm C$ in~(\ref{equ: state def}) denotes the current {\configmat} of the metasurface in state $\bm s$.
The initial state of the MDP framework is denoted by $\bm s_0 = (1,1,\bm C_0)$, where $\bm C_0$ is the {\configmat} of the metasurface whose {\mselem}s are in the first/default {\newstate}.
We refer to the states with indices $(k,n)=(K+1,1)$ as the \emph{terminal states}.
When the terminal states are reached, all the {\newstate}s in the {\configmat} have been selected.

\textbf{Action}: In each state $\bm s = (k, n, \bm C)$, the metasurface selects the state of the $n$-th {\mselem} in the $k$-th frame.
The action set of the metasurface in each state can be expressed as $\mathcal A = \{1,...,N_S\}$, where the $j$-th action ($i\in[1,N_S]$) indicates that the metasurface selects the target {\newstate} to be the $j$-th {\newstate}.
In other words, the metasurface sets $\left(\bm C\right)_{k,n} = \hat{\bm o}(a)$, $a\in\mathcal A$.

\begin{figure}[!t] 
\center{\includegraphics[width=0.5\linewidth]{./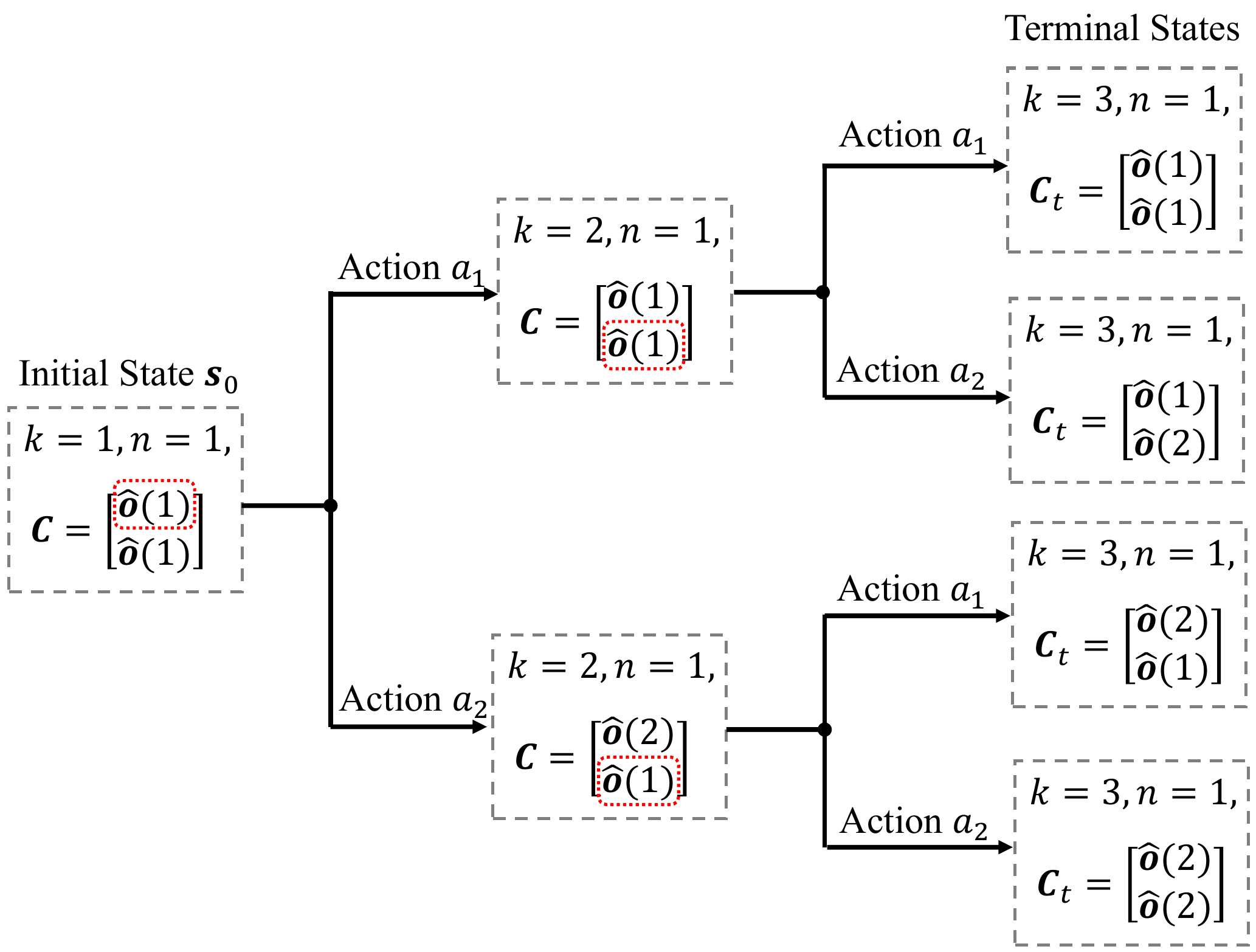}}
	\vspace{-0.5em}
	\setlength{\belowcaptionskip}{-1.em}   
	\caption{Example of the state transition in the formulated MDP, with $K=2$, $N=1$, and $N_S=2$.}
	\vspace{-1em}
	\label{fig: mdp transition}
\end{figure}

\textbf{State Transition Function}:
After the metasurface selects the action, the MDP framework transits into the next state, $\bm s' = (k', n', \bm C' )$, if $(k,n)\neq (K,N)$.
Or, if $(k,n) = (K,N)$, the agent enters the terminal state of the MDP.
For the non-terminal states, the elements of state $\bm s'$ given $\bm s$ and $ a$ can be expressed as follows
\begin{align}
&k' = k+1
, 
\label{equ: transition 2}
\quad n' =  \mathrm{mod}(n+1, N)	+ 1, \\
\label{equ: transition 3}
&\left(\bm C'\right)_{k'',n''}= \begin{cases}
 	\left(\bm C\right)_{k'',n''}, ~\text{if $(k'',n'')\neq(k,n)$},\\
 	\hat{\bm o}(a) ~\text{if $(k'',n'') = (k,n)$},
 \end{cases}\\
 &\hspace{14em}\forall k''\in[1,K], ~n''\in[1,N]. \nonumber
\end{align}

An example of the state transition is illustrated in Fig.~\ref{fig: mdp transition}, where $N_S=2$, $K=2$, and $N=1$. 
In Fig.~\ref{fig: mdp transition}, the red dotted box indicates the element of $\bm C$ that is determined by the action in the current state.
If $(k,n)=(3,1)$, it can be observed that all the {\newstate}s of the {\configmat} have been determined, and the MDP transits into the terminal states, where {\configmat} is denoted by $\bm C_t$.

\textbf{Reward Function}: In general MDP frameworks, the reward is a value obtained by the agent from the environment and quantifies the degree to which the agent's objective has been achieved~\cite{sutton1998reinforcement}.
The reward for the agent is defined as the negative cross-entropy loss of the {\sensfunc} given the {\configmat} determined in the terminal states.
If the terminal state has not been reached, the reward for the state transition is set to be zero.
Specifically, given parameter $\bm w$, the reward in state $\bm s$ is defined as
\beq
\label{equ: mdp reward func}
\mathcal R(\bm s| \bm w) =\begin{cases}
	- L_{\mathrm{CE}}(\bm C_t, \bm w),&\quad \text{if $\bm s$ is a terminal state},\\
	0,&\quad \text{otherwise}.
\end{cases}
\eeq

In the formulated MDP, the metasurface aims for obtaining an optimal policy to obtain the maximum reward in the terminal states.
To be specific, the policy of the agent is a mapping from the state set to the available action set, i.e., $\bm \pi: \mathcal S\rightarrow \mathcal A$.
To define the optimal policy $\bm \pi^*$, we first define the \emph{state-value function} given policy $\bm \pi$ and parameter vector $\bm w$, which indicates the accumulated reward of the agent via a certain state. 
Based on~(\ref{equ: mdp reward func}), the state-value function can be expressed~as
\beq
\label{equ: state value function}
V(\bm s|\bm \pi, \bm w) = \begin{cases}
 - L_{\mathrm{CE}}(\bm C, \bm w),~ \text{if $\bm s$ is a terminal state,}\\
V(\bm s'|\bm \pi,\bm w)|_{\bm s'=\mathcal T(\bm s, \bm\pi(\bm s))}, ~ \text{otherwise,}
\end{cases}
\eeq
The state-value function for $\bm \pi$ in state $\bm s$ indicates the accumulated rewards of the agent after state $\bm s$. Based on~(\ref{equ: state value function}), the state-value function for the initial state can be expressed as
\begin{align}
V(\bm s_0|\bm \pi, \bm  w) &= - L_{\mathrm{CE}}(\bm C^{\bm \pi}_t, \bm  w),
\end{align}
where $\bm C^{\bm \pi}_t$ denotes the terminal state of the metasurface adopting policy~$\bm \pi$.

Therefore, given parameter vector $\bm  w$, the optimal policy of the agent in the MDP framework is given by
\beq
\label{equ: optimal policy given theta}
\bm \pi^*(\bm  w) = \arg\max_{\bm \pi} V(\bm s_0|\bm \pi, \bm  w) \iff \arg\min_{\bm C} L_{\mathrm{CE}}(\bm C, \bm  w).
\eeq
In~(\ref{equ: optimal policy given theta}), it can be observed that finding the optimal policy of the agent in the formulated MDP framework is equivalent to solving the optimal {\configmat} for~(P3).
Besides, solving~(P2) is equivalent to solving the optimal $\bm w$ given the policy $\bm \pi$.

\subsection{Progressing Reward Policy Gradient Algorithm}
To jointly solve (P2) and (P3) under the formulated MDP framework, we propose a novel PRPG algorithm.
The proposed algorithm can be divided into two phase, i.e., the \emph{action selection phase} and the \emph{training phase}, which proceed iteratively.

\subsubsection{Action Selection Process}
\label{sssec: action selection process}
In the proposed algorithm, the agent, i.e., the metasurface, starts from the initial state $\bm s_0$ and adopts the policy for selecting action in each state until reaching the terminal state.
To select the current action in each state, the metasurface use policy $\bm\pi$ that maps the current state to a probability vector.
To be specific, for a given state $\bm s$, the policy results in an $N_S$-dimensional probability vector denoted by $\bm \pi(\bm s|\bm  w)$, which we refer to as the \emph{policy function}.
The $i$-th element of $\bm \pi(\bm s|\bm  w)$ $(i\in[1,N_S])$, i.e., $\pi_{i}(\bm s|\bm  w)$, is in range $[0,1]$ and denotes the probability of selecting the action $a_i$ in state $\bm s$.
Besides, $\bm \pi(\bm s|\bm  w)$ $(i\in[1,N_S])$ satisfies $\sum_{i=1}^{N_S}\pi_{i}(\bm s|\bm  w) = 1$.

However, since the state contains the current {\configmat} that contains $K \cdot N \cdot N_S$ binary variables, the agent faces a large state space, and the policy function is hard to be modeled by using simple functions.
To handle this issue, we adopt a neural network to model the policy function as neural networks are a powerful tool to handle large state space~\cite{zappone2019Model}.
The adopted neural network is referred to as the policy network, and we train the policy network by using the policy gradient algorithm~\cite{Volodymyr2015Human}.
Specifically, the policy network is denoted by $\bm \pi^{\bm \theta}(\bm s|\bm  w)$, where $\bm \theta$ denotes the parameters of the policy network and comprises the connection weights and the biases of the activation functions in the neural network.

\begin{figure}[!t] 
\center{\includegraphics[width=0.6\linewidth]{./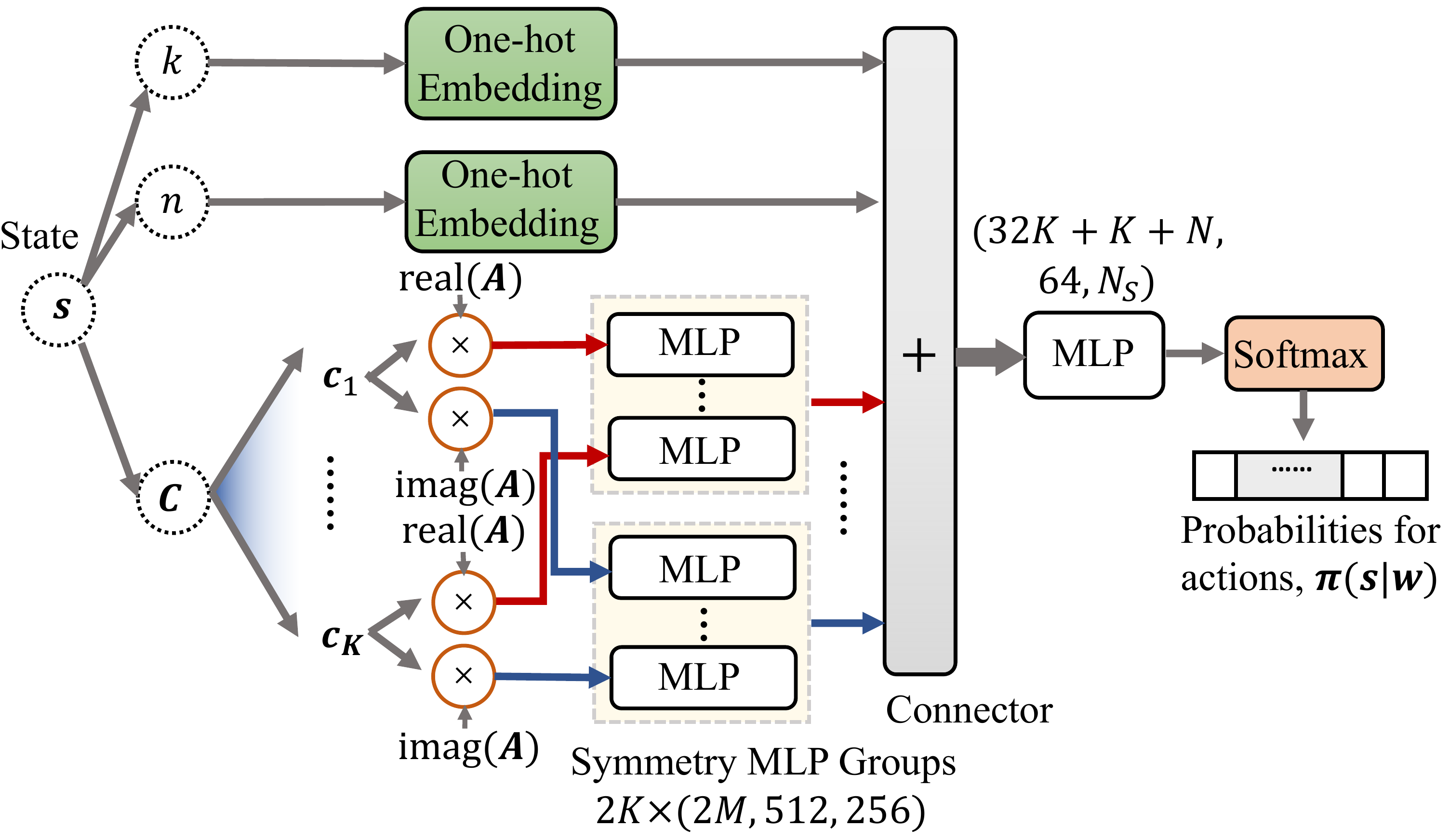}}
	\vspace{-1em}
	\setlength{\belowcaptionskip}{-1.em}   
	\caption{Network structure of the policy network used in the proposed algorithm.}
	\vspace{-1.em}
	\label{fig: policy net}
\end{figure}

The structure of the policy network is shown in Fig.~\ref{fig: policy net}.
In state $\bm s$, $k$ and $n$ are embedded as a $K$-dimensional and an $N$-dimensional vectors, respectively, where the $k$-th and $n$-th elements in the vectors are ones and the other elements are zeros.
Specifically, we refer to the resulted vectors as the \emph{one-hot} vectors.
As for $\bm C$, since the RF sensing for the target space is determined by $\bm C\bm A$ as shown by (\ref{equ: received signal matrix form}), we first divide $\bm C$ to its real and imaginary parts and right-multiply them by the real and imaginary parts of $\bm A$, respectively.
Then, driven by the concept of model-based learning~\cite{Zappone2019Wireless},  we process the result, i.e., $\bm CA$, by multi-layer perceptrons~(MLPs).
Besides, since the $K$ {\config}s are symmetric in their physical meaning and changing their order does not impact the sensing performance, the MLPs that extract feature vectors from $\bm c_1$ to $\bm c_K$ need to be symmetric.
This can be achieved by utilizing two \emph{symmetric MLP groups}, each containing $K$ MLPs with shared parameters.
This significantly reduces the number of parameters and thus facilitates the training of the policy network.
The sizes of the MLPs are labeled in Fig.~\ref{fig: policy net}.
For example, $(2M, 512, 256)$ indicates that each MLP in a symmetric group that has three layers whose sizes are $2M$, $512$, and $256$, respectively.
Then, the one-hot vectors and the $2K$ extracted feature vectors are connected and input to the final MLP.
The result of the final MLP is fed into the \emph{softmax} layer which produces an $N_S$-dimensional vector indicating the probability of selecting the $N_S$ actions.

\subsubsection{Training Process}
The purpose of the training process is two-fold: \emph{(a)} To make the policy network improves the current policy in action selection based on~(\ref{equ: optimal policy given theta}). 
\emph{(b)} To make the {\sensfunc} incur lower cross-entropy loss.
Accordingly, the training process consists of two parts, i.e., training of the policy network and training of the sensing network.
In the training of the policy network, we adopt the policy gradient method~\cite{sutton1998reinforcement}.
Besides, the training of the sensing network results in that the rewards for the terminal states progress during the training of the policy.
Due to these characteristics, the proposed algorithm is named as \emph{progressing reward policy gradient} algorithm.

\textbf{Training of the Policy Network}:
To collect the training data for the policy network, a \emph{replay buffer} is adopted in order to store the experiences of the agent during state transitions.
The replay buffer of the agent is denoted by $\mathcal B = \{\bm e\}$.
The stored experience in the replay buffer is given by $
\bm e = (\bm s, a).
$
It is worth noting that, differently from the replay buffer in traditional deep reinforcement learning algorithms~\cite{Volodymyr2015Human}, the experience in the replay buffer does not record the reward obtained during the state transitions.
This is because the rewards are determined by the current {\sensfunc}, which changes as $\bm w$ being updated.
Thus, we propose that the rewards are calculated when the training process is invoked, instead of being recorded in the replay buffer.

We define a \emph{training epoch}~(or epoch in short) as the state transition process from the initial state to a terminal state.
The experience of the agent within an epoch is stored into the replay buffer and used for training, which is discarded after being used.
Based on the policy gradient theorem\cite{sutton1998reinforcement}, in the training process, the gradient of $V(\bm s_0|\bm \pi, \bm w)$ with respect to $\bm \theta$ satisfies
\beq
\label{equ: value gradient}
\nabla_{\bm \theta} V(\bm s_0 |\bm \pi, \bm w) \!\propto\! \mathbb E_{\mathcal B, \bm \pi^{\bm \theta}}\!\left[
	V(\mathcal T(S_t, A_t)|\bm \theta, \bm w)\frac{\nabla_{\bm\theta} \pi^{\bm \theta}_{A_t}(S_t|\bm w)}{\pi^{\bm \theta}_{A_t}(S_t|\bm w)}
	\right],
\eeq
where $(S_t, A_t)\in\mathcal B$ are the samples of the state and action in the replay buffer of an agent following policy $\bm \pi^{\bm \theta}$, and $Q(S_t, A_t|\bm \theta, \bm w)$ denotes the reward for the agent after selecting the action $A_t$ in $S_t$ and then following~$\bm \pi^{\bm \theta}$.

To calculate the gradient in~(\ref{equ: value gradient}), the rewards for the agent in~(\ref{equ: mdp reward func}) need to be calculated.
If $\bm s$ is a terminal state, the reward $R(\bm s|\bm w)$ is calculated by using the Monte Carlo methods~\cite{Rubinstein2008Simulation}, i.e.,
\begin{align}
\label{equ: calculate r by monte carlo}
R(\bm s|\bm w)  &= 
- \sum_{\bm \nu\in\mathcal V}
\sum_{i=1}^{N_{\mathrm{mc}}} 
\Big(\sum_{m=1}^M  p_{m}(\bm \nu) \ln(\hat{p}_{m}) +(1-p_{m}(\bm \nu)) \ln(1-\hat{p}_{m}) \Big)
\Big|_{\hat{\bm p} = \bm f^{\bm w}(\nGamma \bm \nu + \tilde{\bm \sigma}_i)}. 
\end{align}
Otherwise, $R(\bm s|{\bm w}) = 0$.
In~(\ref{equ: calculate r by monte carlo}), $N_\mathrm{mc}$ indicates the number of sampled noise vectors, and $\tilde{\bm \sigma}_i$ is the $i$-th sampled noise vector.
As the rewards in the non-terminal states are zero, $V(\mathcal T(S_t, A_t)|\bm \theta, \bm w)$ is equal to the reward at the final state for $S_t$, $A_t$, and policy $\bm \pi^{\bm \theta}$.

\begin{figure}[!t] 
\center{\includegraphics[width=0.6\linewidth]{./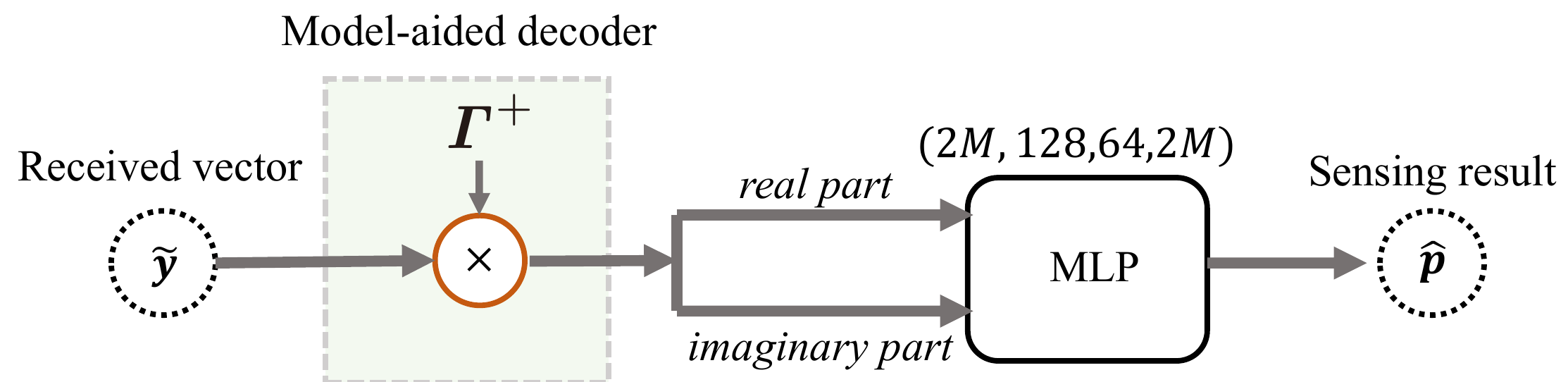}}
	\vspace{-1.em}
	\setlength{\belowcaptionskip}{-1.em}   
	\caption{Sensing network of the metasurface.}
	\vspace{-1.em}
	\label{fig: image net}
\end{figure}

Specifically, in~(\ref{equ: calculate r by monte carlo}), $\hat{\bm p}$ is generated by the sensing network, which is shown in Fig.~\ref{fig: image net}.
The sensing network consists of two parts, i.e., the \emph{model-aided decoder} and an MLP.
Firstly, the received vector is left-multiplied by the \emph{pseudo inverse} of $\nGamma$, which is denoted by $\nGamma^+$ and can be calculated based on~\cite{bailyn2004generalized}.
According to the least-square method~\cite{Boyd_CONVEX}, the model-aided decoder, i.e., $\hat{\bm \nu} = \nGamma^{+} \bm y$, is the optimal linear decoder that results in the minimum mean square error~(MSE) for the actual reflection vector $\bm \nu$, and thus can potentially increase the sensing accuracy of the sensing network.
Then, $\hat{\bm \nu}$ is fed into a fully-connected MLP, which reconstructs the probability vector~$\hat{\bm p}$.

In each process, $\bm\theta$ is updated as follows
\begin{equation}
\label{equ: update of theta}
\bm\theta = \bm \theta +\alpha\cdot{\mathbb E}_{\bm e \in \mathcal B}\left[
    V(\mathcal T(S_t, A_t)|\bm \theta, \bm w)\frac{\nabla_{\bm\theta} \pi^{\bm \theta}_{A_t}(S_t|\bm w)}{\pi^{\bm \theta}_{A_t}(S_t|\bm w)}\right],
\end{equation}
where the gradient $\nabla_{\bm\theta} \pi^{\bm \theta}_{A_t}(S_t|\bm w)$ is calculated by using the back-propagation algorithm~\cite{pineda1987generalization}, and $\alpha$ denotes the training rate.

\textbf{Training of the Sensing Network}:
After updating $\bm \theta$, the training of the sensing network is executed.
The calculated rewards from (\ref{equ: calculate r by monte carlo}) are used to train the sensing network which reduces the cross-entropy loss.
To be specific, the loss function used to train the sensing network can be expressed as follows, which is in accordance with the objective function in the optimization problem~(P2), i.e., 
\beq
\label{equ: sensing function loss}
\mathcal L_{I}(\bm w) = {\mathbb E}_{(\bm s, a)\in\mathcal B}[R(\bm s|{\bm w})].
\eeq
In each training process, $\bm w$ is updated by
\beq
\label{equ: w update}
\bm w = \bm w + \alpha\nabla_{\bm w} \mathcal L_{I}(\bm w),
\eeq
where the gradient $\nabla_{\bm w} \mathcal L_{I}(\bm w)$ is calculated by using the back-propagation algorithm.

In summary, the proposed PRPG algorithm is summarized in Algorithm~\ref{alg: proposed algorithm}.
\begin{algorithm}[!t]
\small
	\label{alg: proposed algorithm}
  \caption{Proposed PRPG Algorithm}
  \begin{algorithmic}  
  \Require{
	Random initial network parameter vectors $\bm \theta$ and $\bm w$\;
	Empty replay buffer $\mathcal B = \emptyset$\;
	Maximum number of training epochs $N_{\mathrm{ep}}$\;
	Set of reflection coefficient vectors $\mathcal V$\;
	Number of Monte Carlo samples for noise $N_{\mathrm{mc}}$\;
	Initial learning rate $\alpha_0$\;
	Maximum number of training epochs $N_{\mathrm{ep}}$
	}
  \Ensure{
  Optimized sensing network parameter vector $\bm w^*$ and the optimized policy network parameter $\bm \theta^*$.
  }\\
 
   \For{$n_{\mathrm{ep}} = 1$ to $N_{\mathrm{ep}}$}
   {   
		Set the current state to be the initial state, i.e., $\bm s = \bm s_0$\;
		
		\emph{\# Action selection phase}
		
   		\While{$\bm s$ is not a terminal state}
   		{
   			Select the {\newstate} of the $n$-th {\mselem} in the $k$-th frame following the probability distribution given by $\bm \pi^{\bm\theta}(\bm s|\bm w)$.	
   			
   			Set action $a$ as the selected {\newstate}, and enter into the transited state $\bm s' = \mathcal T(\bm s, a)$\;
   			
   			Store experience $\bm e = (\bm s, a)$ into replay buffer~$\mathcal B$\;
	   }
	   
	  \emph{\# Training phase}
	   
	   Collect all the experiences from $\mathcal B$, and calculate the reward for each sampled experience by using~(\ref{equ: calculate r by monte carlo})\;	   
	   
	   Update parameter $\bm \theta$ and $\bm w$ by (\ref{equ: update of theta}) and~(\ref{equ: w update}), respectively, where the learning rate $\alpha = \frac{\alpha_0}{1+n_{\mathrm{ep}}\cdot 10^{-3}}$\;
   }
   \end{algorithmic}
\end{algorithm}

\textbf{\revised{Remark:}}
\revised{
    Using the proposed deep reinforcement learning technique enables our proposed algorithm to handle the complicated cases where multiple metasurfaces exist.
    Specifically, when the multiple metasurfaces are on the same plane, they can be considered as a whole, and thus the channel model in (\ref{equ: received signal}) needs no changes.
    When the multiple metasurfaces are on different planes, the channel model needs to be modified to adapt to the correlation between different metasurfaces, which is left for future work.
    Nevertheless, since the problem formulation and the proposed algorithm are independent of the specific channel model, the proposed problem formulation and algorithm can also be adopted for the scenarios to optimize the sensing performance of the general RF sensing scenarios with multiple metasurfaces.}

\section{Algorithm Analysis}
\label{algorithm analysis}

In this section, we analyze the computational complexity and the convergence of the proposed algorithm in Subsections~A and~B, respectively.
In addition, in Subsection~C, we derive a non-trivial lower-bound for the sensing accuracy based on an upper-bound for the cross-entropy loss given a {\configmat}.

\subsection{Computational Complexity}
Since the PRPG algorithm consists of two main phases, i.e., the action selection phase and the training phase, we analyze their respective computational complexities.
The computational complexities are analyzed with regard to the number of {\config}s, $K$, the number of {\mselem}s, $N$, the number of available {\newstate}, $N_S$, and the number of space grids, $M$.

\subsubsection{Complexity of the Action Selection Phase}
In the proposed algorithm, the computationally most expensive part is the estimation of the action probabilities of the policy network.
For each action selection phase, the computational complexity is given in Theorem~\ref{theo: action determine complexity}.

\begin{theorem} \label{theo: action determine complexity}
\emph{(Computational Complexity of the Action Selection Phase)}
In the PRPG algorithm, for the agent in each state, the complexity to calculate the action probabilities and determine the action is $\bigO( KNN_SM)$.
\end{theorem}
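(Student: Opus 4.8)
The plan is to recognize that selecting an action in a state $\bm s=(k,n,\bm C)$ costs exactly one forward pass of the policy network of Fig.~\ref{fig: policy net} followed by one sample from its softmax output, so I would analyze the network block-by-block, following its structure, and keep all fixed architectural widths (the $512$, $256$, and the constant number of layers) as $\bigO(1)$ quantities independent of the size parameters $K$, $N$, $N_S$, and $M$.

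First I would treat the model-based preprocessing block, which I expect to be the bottleneck. Since each row $\bm c_k=(\hat{\bm o}(c_{k,1}),\dots,\hat{\bm o}(c_{k,N}))$ stacks $N$ one-hot vectors of length $N_S$, the {\configmat} $\bm C$ is a $K\times(N N_S)$ matrix, while the projection matrix $\bm A$ is $(N N_S)\times M$. The block forms the real and imaginary parts of $\bm C\bm A$, and each is a dense product of a $K\times(N N_S)$ matrix with an $(N N_S)\times M$ matrix, costing $\bigO(K N N_S M)$ multiply--adds; computing both parts only doubles this, so this stage is $\bigO(K N N_S M)$. Next I would bound the remaining stages. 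Each of the $2K$ per-frame MLPs (the two symmetric groups of $K$ MLPs each) acts on a fixed-width input of size $2M$ through a constant number of constant-width layers, hence costs $\bigO(M)$, and the $2K$ of them together cost $\bigO(KM)$. The concatenated input to the final MLP consists of the two one-hot vectors and the $2K$ constant-length feature vectors, giving dimension $\bigO(K+N)$; with constant-width hidden layers and a softmax over $N_S$ actions, this stage costs $\bigO(K+N+N_S)$, and drawing one action from the $N_S$-entry distribution adds $\bigO(N_S)$.

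Finally I would sum the three contributions, $\bigO(K N N_S M)+\bigO(KM)+\bigO(K+N+N_S)$, and observe that $K N N_S M$ dominates each of the other terms, yielding the claimed $\bigO(K N N_S M)$. The main subtlety to handle carefully is the bookkeeping of the matrix dimensions---in particular that $\bm C$ has $N N_S$ columns (not $N$) because of the one-hot encoding, and that $\bm A$ matches with $N N_S$ rows---together with the decision to absorb the fixed network widths into the constant, so that the genuine bottleneck is exposed as the model-aided product $\bm C\bm A$ rather than any neural-network layer.
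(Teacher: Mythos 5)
Your proposal is correct and follows essentially the same route as the paper's proof: a block-by-block cost accounting of one forward pass through the policy network, identifying the product $\bm C\bm A$ (a $K\times NN_S$ by $NN_S\times M$ multiplication) as the $\bigO(KNN_SM)$ bottleneck, with the symmetric MLP group contributing $\bigO(KM)$ and the final MLP and action-selection step contributing lower-order terms in $K$, $N$, and $N_S$. The only cosmetic difference is that the paper bounds the final MLP by the input-size-times-output-size rule, giving $\bigO(KN_S+NN_S)$, whereas you absorb the hidden widths into constants to get $\bigO(K+N+N_S)$; both are dominated by the matrix product, so the conclusion is unaffected.
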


\begin{proof}
See Appendix~\ref{appx: theorem 1}.
\end{proof}

\subsubsection{{Complexity of the Training Process}}
The computational complexity of~(\ref{equ: calculate r by monte carlo}) is provided in Lemma~\ref{lemma: reward calculation}.
\begin{lemma}
\label{lemma: reward calculation}
	The computational complexity of the reward calculation in~(\ref{equ: calculate r by monte carlo}) is 
	\beq
	\bigO\!\left(KNN_SM+K^2M+M^2  \right). \nonumber
	\eeq
\end{lemma}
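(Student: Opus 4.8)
=== PROOF PROPOSAL ===

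The plan is to analyze the computational cost of evaluating~(\ref{equ: calculate r by monte carlo}) by decomposing it into its constituent operations and tracking the dominant terms in $K$, $N$, $N_S$, and $M$. The reward calculation involves, for each $\bm\nu\in\mathcal V$ and each of the $N_{\mathrm{mc}}$ noise samples, forming the decoded input $\nGamma\bm\nu+\tilde{\bm\sigma}_i$, passing it through the sensing network $\bm f^{\bm w}$ to obtain $\hat{\bm p}$, and then summing the per-grid cross-entropy terms over the $M$ space grids. Since $|\mathcal V|$ and $N_{\mathrm{mc}}$ are treated as constants (they do not appear in the claimed bound), I would focus on the cost of a single evaluation of the summand and then note that the total is a constant multiple thereof.

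First I would account for constructing the measurement matrix $\nGamma=\sqrt P\cdot x\cdot(\bm C-\bm C_0)\bm A$, whose product $(\bm C-\bm C_0)\bm A$ is a $K\times M$ matrix formed from a $K\times(NN_S)$ binary matrix times the $(NN_S)\times M$ projection matrix $\bm A$; this matrix multiplication costs $\bigO(KNN_SM)$, which I expect to be the first term in the bound. Next I would handle the model-aided decoder inside the sensing network (Fig.~\ref{fig: image net}), namely $\hat{\bm\nu}=\nGamma^{+}\bm y$: computing the pseudo-inverse $\nGamma^{+}$ of the $K\times M$ matrix $\nGamma$ requires forming a Gram matrix and inverting it, which contributes the $\bigO(K^2M)$ term, and applying $\nGamma^{+}$ to the $K$-dimensional measurement vector is lower order. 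Finally, the fully-connected MLP that maps $\hat{\bm\nu}\in\mathbb{R}^{M}$ to the probability vector $\hat{\bm p}\in\mathbb{R}^{M}$ has weight matrices whose sizes scale with $M$, giving a forward-pass cost of $\bigO(M^2)$, the third term; the subsequent summation of $M$ cross-entropy terms is only $\bigO(M)$ and is absorbed.

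Summing the three dominant contributions yields $\bigO(KNN_SM+K^2M+M^2)$, matching the statement. The main obstacle I anticipate is pinning down the cost of the pseudo-inverse precisely: depending on whether $\nGamma$ is taken to be tall or wide and on which standard algorithm (normal equations versus SVD) is assumed, the term could nominally be $\bigO(K^2M)$, $\bigO(KM^2)$, or involve a cubic factor. I would resolve this by invoking the pseudo-inverse construction cited in~\cite{bailyn2004generalized} and arguing that, under the regime where the number of frames $K$ is comparable to or smaller than $M$, the Gram-matrix route gives exactly the $\bigO(K^2M)$ contribution claimed, with the MLP's $\bigO(M^2)$ dominating the remaining reconstruction cost. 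The rest of the argument is routine bookkeeping of matrix--vector products and should not present difficulty.
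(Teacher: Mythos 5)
Your proof is correct and follows essentially the same decomposition as the paper's: the $\bigO(KNN_SM)$ cost of forming $(\bm C-\bm C_0)\bm A$, the $\bigO(K^2M)$ pseudo-inverse of the $K\times M$ matrix $\nGamma$, the $\bigO(M^2)$ forward pass of the sensing-network MLP, and the absorbed $\bigO(M)$ cross-entropy summation, with $|\mathcal V|$ and $N_{\mathrm{mc}}$ treated as constants. Your extra care about which pseudo-inverse algorithm yields $\bigO(K^2M)$ is a reasonable refinement of a step the paper simply settles by citation, but it does not change the argument.
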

\begin{proof}
See Appendix~\ref{appx: proof of lemma 1}	
\end{proof}

The computational complexities of training the policy network and the sensing network are given in Lemma~\ref{lemma: training calculation complexity}.
\begin{lemma}
\label{lemma: training calculation complexity}
	After calculating the rewards, the complexity of the training the sensing network and the policy network are $\bigO(M^2)$ and $\bigO(N_S(K+N+M))$, respectively.
	If a single MLP is used to substitute the symmetric MLP group, the computational complexity of training the policy network is $\bigO(KMN_S+NN_S)$.
\end{lemma}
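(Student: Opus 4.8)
The plan is to rely on the standard fact that, for a feed-forward neural network trained by back-propagation, the cost of one gradient-and-update step in (\ref{equ: update of theta}) and (\ref{equ: w update}) is proportional to the number of trainable parameters (connection weights and biases), since both the backward pass and the parameter update scale linearly with this number. I would therefore count, layer by layer, the trainable parameters of each network, treating the fixed architectural constants in Figs.~\ref{fig: policy net} and~\ref{fig: image net} (the hidden widths $512$, $256$, etc.) as $\bigO(1)$ and keeping only the dependence on $K$, $N$, $N_S$, and $M$.

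First, for the sensing network of Fig.~\ref{fig: image net}: the model-aided decoder $\hat{\bm\nu}=\nGamma^{+}\bm y$ is a fixed linear map and carries no trainable parameters, so only the subsequent fully-connected MLP is trained. This MLP maps the $M$-dimensional vector $\hat{\bm\nu}$ to the $M$-dimensional probability vector $\hat{\bm p}$; a dense connection between two layers of width $\bigO(M)$ contributes $\bigO(M^2)$ weights, which dominates the biases and any constant-width layers. Hence back-propagation through the sensing network costs $\bigO(M^2)$.

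Next, for the policy network of Fig.~\ref{fig: policy net} with the symmetric MLP groups: because the two groups use shared parameters, the distinct trainable weights of each group are those of a single MLP whose input width is $\bigO(M)$ (the real/imaginary features of $\bm C\bm A$) and whose remaining widths are constants, giving $\bigO(M)$ parameters. The one-hot embeddings of $k$ and $n$ carry no weights, and the $2K$ feature vectors fed into the final MLP each have constant width, so the final MLP receives an input of width $\bigO(K+N)$ and produces the $N_S$ action logits, contributing $\bigO(N_S(K+N))$ parameters. Summing and using $M\le N_S M$ to present the bound in factored form yields $\bigO\!\left(M+N_S(K+N)\right)=\bigO\!\left(N_S(K+N+M)\right)$. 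If a single MLP replaces the symmetric group, there is no sharing and its input is the full concatenation of $\bm C\bm A$ of width $\bigO(KM)$; connecting this, together with the $\bigO(N)$-width one-hot of $n$, through to the $N_S$ outputs gives $\bigO(KMN_S+NN_S)$ parameters, which is the stated cost.

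The step I expect to be the crux is the accounting for the symmetric MLP groups. The point is that the relevant cost is that of the update in (\ref{equ: update of theta}), which is governed by the number of \emph{distinct} trainable parameters; the shared-parameter design collapses the $K$ copies into a single MLP and keeps this count at $\bigO(M)$ rather than $\bigO(KM)$. Making this precise—and in particular arguing that the shared weights enter the bound only once, whereas the single-MLP variant, lacking sharing and facing the full $\bigO(KM)$-wide input, incurs the extra factor of $K$ on the $MN_S$ term—is the delicate part, and it is exactly what separates the two bounds $\bigO(N_S(K+N+M))$ and $\bigO(KMN_S+NN_S)$. Everything else reduces to routine layer-by-layer counting with the constant hidden widths absorbed into $\bigO(\cdot)$.
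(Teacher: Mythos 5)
Your proposal is correct and follows essentially the same route as the paper's proof: both decompose the networks into their constituent MLPs, invoke the linearity of back-propagation cost in the dense-layer sizes (your ``number of trainable parameters'' and the paper's ``product of input size and output size'' coincide for these architectures), and hinge on the same key observation that the shared parameters of the symmetric MLP group reduce the input-dependence from $\bigO(KM)$ to $\bigO(M)$, whereas the single-MLP substitute faces the full $2KM$-wide input and picks up the extra factor of $K$. Your intermediate bound $\bigO(M+N_S(K+N))$ is in fact slightly tighter than the paper's term-by-term accounting before you relax it to the stated $\bigO(N_S(K+N+M))$, but this does not change the argument.
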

\begin{proof}
See Appendix~\ref{appx: proof of lemma 2}.	
\end{proof}

It can be observed from Lemma~\ref{lemma: training calculation complexity} that using a symmetric MLP group instead of a single large MLP in the policy network can reduce the complexity of the training process.

Based on Lemmas~\ref{lemma: reward calculation} and~\ref{lemma: training calculation complexity}, the total computational complexity of each training process is provided in Theorem~\ref{theo: training complexity}.
\begin{theorem}\label{theo: training complexity}
\emph{(Computational Complexity of the Training Process)}
The computational complexity of each training phase of the PRPG algorithm is $\bigO\!\left(KNN_SM+K^2M+M^2 \right).$\end{theorem}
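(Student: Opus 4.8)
The plan is to exploit the sequential structure of the training phase. Within each epoch the training phase executes three operations one after another: the Monte-Carlo reward computation of~(\ref{equ: calculate r by monte carlo}), the update of the sensing-network parameter $\bm w$ via~(\ref{equ: w update}), and the update of the policy-network parameter $\bm \theta$ via~(\ref{equ: update of theta}). Because these steps are run in series rather than nested, the total cost of a training phase is simply the sum of their individual costs, so it suffices to bound each summand and then collapse the sum to its dominant terms.

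First I would invoke Lemma~\ref{lemma: reward calculation}, which already supplies the reward-computation cost as $\bigO(KNN_SM+K^2M+M^2)$, noting that the summation over $\mathcal V$ and the $N_{\mathrm{mc}}$ noise samples are absorbed into the constant of this bound and that the terminal reward is computed once per phase. Next I would invoke Lemma~\ref{lemma: training calculation complexity} for the two parameter updates: training the sensing network costs $\bigO(M^2)$ and training the policy network (using the symmetric MLP group) costs $\bigO(N_S(K+N+M))$. Summing the three contributions gives
\beq
\bigO(KNN_SM+K^2M+M^2) + \bigO(M^2) + \bigO(N_S(K+N+M)). \nonumber
\eeq

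The remaining step is to show that the last two terms are absorbed by the first. The sensing-network term $M^2$ coincides with a term already present in the reward bound and is therefore trivially dominated. For the policy-network term, I would compare each of its three pieces against $KNN_SM$: since $K,N,M\ge 1$, we have $N_SK\le KNN_SM$, $N_SN\le KNN_SM$, and $N_SM\le KNN_SM$, so that $N_S(K+N+M)=\bigO(KNN_SM)$. Substituting these bounds collapses the sum to $\bigO(KNN_SM+K^2M+M^2)$, which is the claimed complexity. I do not anticipate a genuine obstacle, since the argument is additive book-keeping resting entirely on Lemmas~\ref{lemma: reward calculation} and~\ref{lemma: training calculation complexity}; the one point deserving care is confirming that the sensing-network update and the back-propagation gradient are each performed \emph{once} per training phase rather than once per state or per Monte-Carlo sample, so that their costs enter the total additively instead of being multiplied by a factor of $K$, $N$, or $N_{\mathrm{mc}}$.
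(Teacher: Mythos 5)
Your proposal is correct and follows essentially the same route as the paper: both decompose the training phase into the reward computation plus the two network updates, invoke Lemmas~\ref{lemma: reward calculation} and~\ref{lemma: training calculation complexity}, and observe that the reward calculation of~(\ref{equ: calculate r by monte carlo}) dominates the other additive terms. Your explicit verification that $N_S(K+N+M)=\bigO(KNN_SM)$ and that the $M^2$ term is already absorbed is a slightly more careful write-up of the same domination argument the paper states in one line.
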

\begin{proof}
See Appendix~\ref{appx: proof of theorem 2}.	
\end{proof}

\subsection{Convergence Analysis}
\label{sec: convergence analysis}

The detailed convergence analysis of the PRPG algorithm is based on the convergence analysis of the block stochastic gradient~(BSG) algorithm. 
We denote $\bm w$ by $\bm x_1$ and denote $\bm \theta$ by $\bm x_2$, and thus the objective function in~(P1) can be denoted by $F(\bm x_1, \bm x_2) = L_{\mathrm{CE}}(\bm C^{\bm \pi^{\bm\theta}}_t, \bm w)$, where $\bm C^{\bm \pi^{\bm\theta}}_t$ indicates the {\configmat} in the terminal state for the metasurface with policy $\bm \pi^{\bm \theta}$.
Based on~\cite{Xu2015Block}, a BSG algorithm for solving~(P1) is formulated as Algorithm~\ref{alg: bsg algorithm}, whose convergence analysis can be given by Lemma~\ref{lem: bsg converge}.
\begin{lemma}
\label{lem: bsg converge}
Algorithm~\ref{alg: bsg algorithm} converges to a locally optimal $\bm x^*_1$ and $\bm x^*_2$ as the number of iterations $N_{\mathrm{itr}}\rightarrow\infty$,  given that the following conditions are satisfied:
\begin{enumerate}
\item There exist a constant $c$ and a constant $\varepsilon$ such that, for each iteration indexed by $j$, the inequalities $\|\mathbb E[\tilde{\bm g}_{i}^j - \nabla_{\bm x_i} F(\bm x_1, \bm x_2)] \|_2 \leq c \cdot \max_i(\alpha^j_{i})$ and $\mathbb E[\|\tilde{\bm g}_{i}^j - \nabla_{\bm x_i} F(\bm x_1, \bm x_2)\|^2] \leq \varepsilon^2$, $i=1,2$ are fulfilled.
\item There exists a uniform Lipschitz constant $\varrho>0$ such that 
\beq
\hspace{-1.8em}\sum_{i=1,2}\!\|\nabla_{\bm x_i}F(\bm x_1, \bm x_2)\! -\! \nabla_{\bm x_i}F(\bm x_1', \bm x_2')\|_2^2 \! \leq \! \varrho^2 \! \sum_{i=1,2}\|\bm x_i - \bm x_i'\|^2_2. \nonumber
\eeq
\item There exists a constant $\psi$ such that $\mathbb E[\|\bm x_1^{j}\|_2^2 + \|\bm x_2^{j}\|_2^2]\leq \psi^2$, $\forall j$.
\end{enumerate}
\end{lemma}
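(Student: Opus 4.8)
The plan is to establish this lemma as a direct specialization of the block stochastic gradient convergence theory of~\cite{Xu2015Block}, obtained by viewing Algorithm~\ref{alg: bsg algorithm} as a two-block instance of that framework with blocks $\bm x_1 = \bm w$ and $\bm x_2 = \bm \theta$ and objective $F(\bm x_1, \bm x_2) = L_{\mathrm{CE}}(\bm C^{\bm \pi^{\bm\theta}}_t, \bm w)$.

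First, I would pin down the structural correspondence. Because maximizing the reward (the negative cross-entropy loss) is equivalent to minimizing $F$, the updates~(\ref{equ: w update}) and~(\ref{equ: update of theta}) are exactly block-coordinate stochastic gradient iterations on $F$, in which $\tilde{\bm g}_i^j$ is the stochastic estimate of the partial gradient $\nabla_{\bm x_i} F$ at iteration $j$. The randomness enters through the Monte Carlo noise samples $\tilde{\bm \sigma}_i$ in~(\ref{equ: calculate r by monte carlo}) and through the sampling of experiences from the replay buffer $\mathcal B$ that defines the empirical expectation in the policy-gradient estimator~(\ref{equ: value gradient}). I would also note that the diminishing schedule $\alpha = \alpha_0/(1 + n_{\mathrm{ep}}\cdot 10^{-3})$ satisfies the Robbins--Monro conditions $\sum_j \alpha^j = \infty$ and $\sum_j (\alpha^j)^2 < \infty$ required by~\cite{Xu2015Block}.

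Next, I would match the three hypotheses to the standing assumptions of the cited theorem. Condition~1 is the bias/variance requirement on the block gradient estimates: the first inequality forces the bias to vanish at the rate of the largest step size (which is essential for the Gauss--Seidel-type block updates, where the freshly updated block perturbs the gradient seen by the next block), while the second bounds the second moment uniformly. Condition~2 is the uniform Lipschitz-smoothness of $F$, and Condition~3 is the almost-sure boundedness of the iterate sequence. These are precisely the premises under which~\cite{Xu2015Block} shows that every limit point of the iterates is a stationary point of $F$, i.e., a locally optimal solution of the nonconvex problem~(P1).

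Finally, invoking that convergence result under Conditions~1--3 gives $\|\nabla F(\bm x_1^j, \bm x_2^j)\|_2 \to 0$ and convergence of $(\bm x_1^j, \bm x_2^j)$ to a stationary point as $N_{\mathrm{itr}}\to\infty$, which is the claim. \emph{The main obstacle} is not a fresh estimate but the careful verification that the estimators produced by replay-buffer sampling and by the Monte Carlo reward evaluation actually satisfy Condition~1 in the measurability/independence sense assumed in~\cite{Xu2015Block}; concretely, one must show that, conditioned on the current iterate, the per-block gradient noise has vanishing bias and bounded variance as asserted, which is where essentially all of the verification work concentrates.
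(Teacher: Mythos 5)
Your proposal matches the paper's own proof, which simply invokes Corollary~2.12 of~\cite{Xu2015Block} and observes that the three stated conditions are equivalent to that corollary's assumptions; your mapping of Conditions~1--3 to the bias/variance, Lipschitz-smoothness, and bounded-iterate hypotheses is exactly this argument, elaborated. The only (harmless) difference is that you fold in material the paper treats separately after the lemma---the identification of the updates~(\ref{equ: update of theta}) and~(\ref{equ: w update}) with BSG iterations and the step-size schedule---whereas the lemma itself concerns only Algorithm~\ref{alg: bsg algorithm}.
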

\begin{proof}
Please refer to Corollary 2.12 in~\cite{Xu2015Block}, where the assumptions required in Corollary~2.12 in~\cite{Xu2015Block} are equivalent to the three conditions in Lemma~\ref{lem: bsg converge}.
\end{proof}

Comparing Algorithms~\ref{alg: proposed algorithm} and~\ref{alg: bsg algorithm}, we can observe that the only difference between the two algorithms is in the functions for updating parameters.
Nevertheless, solving the minimization problem (\ref{equ: bsg para update func}), we can derive that  (\ref{equ: bsg para update func}) is equivalent to that 
\beq
\bm x_i^j = \bm x_i^{j-1} - \alpha_i^{j}\tilde{\bm g}_i^j.
\eeq
As the learning rate sequence $\{\alpha_i^j\}_j$ in Algorithm~\ref{alg: bsg algorithm} can be arbitrarily selected, the parameter update of Algorithms~\ref{alg: proposed algorithm} and~\ref{alg: bsg algorithm} are essentially equivalent.
In this regard, the proposed PRPG algorithm can be categorized as an BSG algorithm, whose convergence analysis follows Lemma~\ref{lem: bsg converge}.

However, since neural networks are encompassed in the {\sensfunc} and the policy function, the conditions in Lemma~\ref{lem: bsg converge} are hard to be proven theoretically.
Therefore, in additional to the theoretical analyses provided above, we also analyze the convergence through practical simulations in Section~\ref{sec: simulation result}.

\revised{Moreover, the obtained solution by the proposed deep learning algorithm is a locally optimal solution of (P1).
As shown in Algorithm~\ref{alg: proposed algorithm}, we iteratively solve~(P2) and~(P3) by updating $\bm \theta$ using (\ref{equ: update of theta}) and updating $\bm w$ using (\ref{equ: w update}), respectively.
Based on the Q-learning algorithm~\cite{sutton1998reinforcement}, updating $\bm \theta$ with the aim to maximize the total reward is equivalent to finding $\bm C$ minimizing $L_{\mathrm{CE}}$ given $\bm w$.
Besides, it can be observed that updating $\bm w$ directly minimizes $L_{\mathrm{CE}}$ given $\bm C$.
When the iteration terminates, updating the variables of $\bm C$ or $\bm w$ will not lead to a lower objective function value, i.e., the cross-entropy loss.
Therefore, the solution obtained by the proposed Algorithm~\ref{alg: proposed algorithm} is a locally optimal solution of the original problem (P1).}

\begin{algorithm}[!t]
\small
	\label{alg: bsg algorithm}
  \caption{BSG algorithm for solving~(P1)}
  \begin{algorithmic}  
  \Require{
	Starting point $\bm x_i^{0}$, $i=1,2$\;
	Learning rate sequence $\{\alpha_{i}^{j}; i=1,2\}_{j=1,2,...}$\;
	Maximum number of iterations $N_{\mathrm{itr}}$\;
	Monte Carlo sampling size of the random noise $N_{\mathrm{mc}}$.
	}
  \Ensure{
  Optimized $\bm x_1^*$ and $\bm x_2^*$ for (P1).
  }\\
 
   \For{$j=1,2,...,N_{\mathrm{itr}}$}
   {   

		\For{$i=1,2$}
		{
			Compute sample gradient for the $\bm w$ in the $j$-th iteration by
			$
			\tilde{\bm g}_{i}^{j} =
			\nabla_{\bm x_i} F(\bm x_{<i}^{j}, \bm x_{\geq i}^{(j-1)})
			$
			
			Update parameter $\bm x_i$ by
			\beq
			\label{equ: bsg para update func}
			\bm x^{j}_i = \arg\min_{\bm x_i} 
				(\tilde{\bm g}_{i}^{j})^T(\bm x_i - \bm x_i^{j-1}) 
				+ \frac{1}{2\alpha_{i}^{j}} 
					\| \bm x_i - \bm x_i^{j-1}\|^2_2. \nonumber
			\eeq
		}		
   }\\
   
Output $(\bm x_1^{N_{\mathrm{itr}}}, \bm x_2^{N_{\mathrm{itr}}})$ as $(\bm x_1^*, \bm x_2^*)$\;
   \end{algorithmic}
\end{algorithm}

\subsection{Lower Bound for Sensing Accuracy}
\label{sec: analytical results}

In this section, we compute a lower-bound for the sensing accuracy in~(P2) given {\configmat} $\bm C$.
To derive a lower bound, we assume that the {\sensfunc} maps the received RF signals to the sensing results by using an optimal linear decoder and a threshold judging process.
In the following, we first provide the detection criterion for sensing, and then derive a lower-bound for sensing accuracy by leveraging an upper-bound for the cross-entropy loss.

\subsubsection{Detection Criterion for Sensing}
The reconstructed reflection coefficient vector from the linear decoder can be expressed as
\beq
\label{equ: initial analysis equation}
\hat{\bm \nu} = \nGamma^{+}\tilde{\bm y} = \nGamma^+\nGamma\bm \nu + \nGamma^+ \tilde{\bm \sigma}.
\eeq

Based~(\ref{equ: initial analysis equation}), we analyze the probability distribution of the random variable $\hat{\nu}_m$, i.e., the $m$-th element of $\hat{\bm \nu}$.
We denote the $m$-th row vectors of $\nGamma^+$ and $\nGamma^+ \nGamma$ as $\bm \gamma_m$ and $\bm \xi_m$, respectively.
Then, $\hat{\nu}_m = \bm\xi_m\bm\nu+\bm\gamma_m\tilde{\bm\sigma}$.
The emptiness of the space grids other than the $m$-th space grid is modeled by the vector $\bm q_{-m}$, where $q_{-m, m'}=0$ and $1$ indicate that the $m'$ space grid is empty and nonempty, respectively,~($m'\in[1,M],~m'\neq m$).
When the $m$-th space grid is empty~(or nonempty), we denote the probability density functions~(PDFs) of the real and imaginary parts of $\hat{\nu}_m$, i.e., $\hat{\nu}_{R,m}$ and $\hat{\nu}_{I,m}$, by ${\mathcal P}^0_{R,i}(x)$ and ${\mathcal P}^0_{I,i}(x)$~(or ${\mathcal P}^1_{R,i}(x)$ and ${\mathcal P}^1_{I,i}(x)$), respectively.

We judge the emptiness of the $m$-th space grid according to the sum of $\hat{\nu}_{R,m}$ and $\hat{\nu}_{I,m}$, i.e., $\mu_m = \hat{\nu}_{R,m} + \hat{\nu}_{I,m}$.
When the $m$-th space grid is empty, given $\bm q_{-m}$, the sum of $\hat{\nu}_{R,m}$ and $\hat{\nu}_{I,m}$, i.e., $\mu_m$, follows a normal distribution, i.e., $\mu_m \sim \mathcal N(0, \epsilon_m^0(\bm q_{-m}))$, where 
\begin{align}
\label{equ: var when empty}
\epsilon^0_m(\bm q_{-m}) = & \sum_{\substack{m'\neq m,\\ m'\in\mathcal M}}  q_{-m,m'} \cdot \epsilon_{\reflect, m'} \cdot (\|{\bm \xi}_{R,m'}\|^2 + \|{\bm \xi}_{I,m'}\|^2) \nonumber \\
&+ \sum_{m'\in\mathcal M} \epsilon \cdot (\|{\bm \gamma}_{R,m'}\|^2 + \|{\bm \gamma}_{I,m'}\|^2).
\end{align}
Here, $\mathcal M$ is the set of indexes of $M$ space grids, and subscripts $R$ and $I$ indicate the real and imaginary parts of a vector, respectively.
The first summation term in~(\ref{equ: var when empty}) corresponds to the variance due to the reflection coefficients at the space grids other than the $m$-th space grid, and the second summation term in~(\ref{equ: var when empty}) corresponds to the variance due to the noise at the Rx unit.

On the other hand, when the $q$-th space grid is nonempty, the variance due to reflection coefficient of the $m$-th space grid needs to be added.
Denote the variance of the reflection coefficient of the $m$-th space grid by $\epsilon_{\reflect, m}$, and the variance of $\mu_m$ can be expressed as
\beq 
\label{equ: var when nonempty}
\epsilon^1_m(\bm q_{-m}) = 
    \epsilon^0_m(\bm q_{-m}) 
        + 
    \epsilon_{\reflect, m} \cdot (\|{\bm \xi}_{R,m}\|^2 + \|{\bm \xi}_{I,m}\|^2).
\eeq

Given the emptiness of the $m$-th space grid, the PDF of $\mu_m$ can be written as follows
\begin{align}
    \label{equ: theo sum pdf func}
    \mathcal P^i_{m}(x) =\!  
    \sum_{\bm q_{-m}\in\mathcal Q_{-m}} 
    \! P_m(\bm q_{-m})
    {\mathcal P}_{norm}(x; 0, \epsilon_m^i(\bm q_{-m})), ~ i=0,1
\end{align}
where $\mathcal Q_{-m}$ indicates the set of all possible $\bm q_{-m}$,
$ {\mathcal P}_{norm}(x; 0, \epsilon^i_m(\bm q_{-m}))$~($i=0,1$) denotes the PDF of a normal distribution with zero mean and variance $\epsilon^i_m(\bm q_{-m})$,
and $P_m(\bm q_{-m})$ denotes the probability for the existence indicated by $\bm q_{-m}$ to be true, i.e.,
\beq
P_m(\bm q_{-m}) = \prod_{m'\neq m, m'\in\mathcal M} Pr_{m'}(q_{-m,m'}).
\eeq
Here, $Pr_{m'}(x)$ with $x$ being $0$ and $1$ indicates the probabilities that the $m'$-th space grid are empty and nonempty, respectively.

We use the difference between $\mathcal P_m^1(\bm q_{-m})$ and $\mathcal P_m^0(\bm q_{-m})$ as the \emph{judgement variable} to determine whether the $m$-th space grid is empty or not.
To facilitate the analysis, we adopt the \emph{log-sum} as a substitute for the sum in (\ref{equ: theo sum pdf func}).
Therefore, the judgement variable can be calculated~as
\begin{align}
\label{equ: lambda}
\tau_m \!= \!
    &\sum_{\bm q_{-m}\in\mathcal Q_{-m}} 
        \ln\left( p_m(\bm q_{-m}) 
        {\mathcal P}_{norm}(x; 0, \epsilon^1_m(\bm q_{-m})) \right) \\
    & - \sum_{\bm q_{-m}\in\mathcal Q_{-m}} \!
        \ln\left(p_m(\bm q_{-m}) 
        {\mathcal P}_{norm}(x; 0, \epsilon^0_m(\bm q_{-m}))\right). \nonumber
\end{align}

It can be observed from~(\ref{equ: lambda}) that $\tau_m$ increases as $\mathcal P_m^1(\mu_m)$ increases, and that it decreases as $\mathcal P_m^0(\mu_m)$ increases.
Therefore, we can judge the emptiness of the $m$-th space grid through the value of $\tau_m$.
Specifically, the sensing result of the $m$-th space grid is determined by comparing the judging variable $\tau_m$ with the \emph{judging threshold}, which is denoted by $\rho_m$.
If $\tau_m \leq \rho_m$, the sensing result of the $m$-th space grid is ``empty'', which is denoted by the hypothesis $\mathcal H_0$.
Otherwise, if $\tau_m > \rho_m$, the sensing result is ``non-empty'', which is denoted by the hypothesis $\mathcal H_1$.
After simplifying~(\ref{equ: lambda}), the \emph{detection criterion} for $\mathcal H_0$ and $\mathcal H_1$ can be expressed as
\begin{align}
\label{equ: judging function original}
\tau_m =  
    \mu_m^2 \sum_{\bm q_{-m}\in Q_{-m}} \frac{\epsilon_m^1(\bm q_{-m})- \epsilon_m^0(\bm q_{-m})}{2\epsilon_m^1(\bm q_{-m})\epsilon_m^0(\bm q_{-m})}  - \frac{1}{2}\sum_{\bm q_{-m}\in\mathcal Q_{-m}} \ln\left(\frac{\epsilon_m^1(\bm q_{-m})}{\epsilon_m^0(\bm q_{-m})}\right)
        \LRT{\mathcal H_1}{\mathcal H_0}
    \rho_m.
    \end{align}
Since $\mu_m^2 > 0$, the range of $\rho_m$ can be expressed $[-\frac{1}{2}\sum_{\bm q_{-m}\in\mathcal Q_{-m}} \ln(\frac{\epsilon_m^1(\bm q_{-m})}{\epsilon_m^0(\bm q_{-m})}), \infty]$.

\vspace{0.5em}
\subsubsection{Upper Bound of Cross Entropy Loss}
We analyze the cross-entropy loss incurred by the detection criterion~in (\ref{equ: judging function original}), which can be considered as a non-trivial upper-bound for the cross-entropy loss defined in~(\ref{equ: def of sensing loss}).
As the sensing result given by~(\ref{equ: judging function original}) is either $0$ or $1$,  if the sensing result is accurate, the incurred cross-entropy loss will be $-\ln(1)=0$; otherwise, the incurred cross-entropy loss will be $-\ln(0) \rightarrow \infty$.
In practice, the cross-entropy loss due to an inaccurate sensing result is bounded by a large number $C_{\mathrm{In0}}$.
Given $\mathcal H_0$~(or $\mathcal H_1$) being true, the probability for the sensing result to be inaccurate is the probability of $\tau_m > \rho_m$, i.e., $\Pr\{\tau_m > \rho_m|\mathcal H_0\}$~(or $\Pr\{\tau_m \leq \rho_m|\mathcal H_1\}$).
Denote the probability for an object to be at the $m$-th space grid by $\tilde{p}_{m}$, and the cross-entropy loss of the $m$-th space grid can be calculated as 
\begin{align}
\label{equ: analytical CE loss of m}
L_m = 
    C_{\mathrm{In0}} \cdot (1 - \tilde{p}_{m}) \cdot \Pr\{ \tau_m > \rho_m | \mathcal H_0 \}  + C_{\mathrm{In0}}\cdot \tilde{p}_{m} \cdot \Pr\{ \tau_m \leq \rho_m | \mathcal H_1 \}, 
\end{align}
where $\Pr\{\tau_m > \rho_m|\mathcal H_0\}$ and $\Pr\{\tau_m \leq \rho_m|\mathcal H_1\}$ can be calculated by using Proposition~\ref{prop: threshold judging}.
\begin{proposition}
\label{prop: threshold judging}
The conditional probability for sensing the $m$-th space grid inaccurately can be calculated as follows
\begin{align}
\label{equ: condi prob prop 1}
\Pr\{ \tau_m > \rho_m | &\mathcal H_0\} 
= \Pr\{ \mu^2_m > \hat{\rho}_m | \mathcal H_0\}
= 1
    \! - \! 
    \sum_{\bm q_{-m}\in\mathcal Q_{-m}}
        P_m(\bm q_{-m})
            \!\cdot\!
        \mathrm{erf}\left(\sqrt{ \frac{\hat{\rho}_m}{2\epsilon_m^0(\bm q_{-m})} }\right), 
\end{align}
\begin{align}
\label{equ: condi prob prop 2}
\Pr\{ \tau_m \leq \rho_m | &\mathcal H_1\}  = 
\Pr\{ \mu^2_m \leq \hat{\rho}_m | \mathcal H_1\} 
= \sum_{\bm q_{-m}\in\mathcal Q_{-m}}
        P_m(\bm q_{-m}) 
            \!\cdot\!
        \mathrm{erf}\left(\sqrt{ \frac{\hat{\rho}_m}{2\epsilon_m^1(\bm q_{-m})}}\right),
\end{align}
where $ \mathrm{erf}(\cdot)$ denotes the \emph{error function}~\cite{McDonough_SIGNAL}, and
\begin{align}
\label{equ: reform judging  threshold}
\hat{\rho}_m \! =\!\frac{\frac{1}{2}\sum_{\bm q_{-m}\in\mathcal Q_{-m}} \ln({\epsilon_m^1(\bm q_{-m})}/{\epsilon_m^0(\bm q_{-m})}) + \rho_m}%
    {\sum_{\bm q_{-m}\in\mathcal Q_{-m}} \!\frac{\epsilon_m^1(\bm q_{-m})  -  \epsilon_m^0(\bm q_{-m})}{\epsilon_m^1(\bm q_{-m})\cdot\epsilon_m^0(\bm q_{-m})}}.
\end{align}
\end{proposition}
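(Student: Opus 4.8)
The plan is to reduce the two composite-hypothesis error probabilities to a single threshold test on the scalar $\mu_m^2$, and then to evaluate that test against the Gaussian-mixture law of $\mu_m$ supplied by~(\ref{equ: theo sum pdf func}). First I would read off from~(\ref{equ: judging function original}) that the detection statistic is affine in $\mu_m^2$, i.e.\ $\tau_m = A_m\,\mu_m^2 - B_m$ with $A_m = \sum_{\bm q_{-m}\in\mathcal Q_{-m}}\frac{\epsilon_m^1(\bm q_{-m})-\epsilon_m^0(\bm q_{-m})}{2\epsilon_m^1(\bm q_{-m})\epsilon_m^0(\bm q_{-m})}$ and $B_m = \tfrac12\sum_{\bm q_{-m}\in\mathcal Q_{-m}}\ln\!\big(\epsilon_m^1(\bm q_{-m})/\epsilon_m^0(\bm q_{-m})\big)$. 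The key structural fact is that $A_m>0$: by~(\ref{equ: var when nonempty}) each $\epsilon_m^1(\bm q_{-m})=\epsilon_m^0(\bm q_{-m})+\epsilon_{\reflect,m}(\|\bm\xi_{R,m}\|^2+\|\bm\xi_{I,m}\|^2)$ strictly exceeds $\epsilon_m^0(\bm q_{-m})$, so every summand of $A_m$ is positive. Hence $\tau_m$ is strictly increasing in $\mu_m^2$, the event $\{\tau_m>\rho_m\}$ coincides with $\{\mu_m^2>\hat\rho_m\}$ where $\hat\rho_m$ is the unique root of $\tau_m=\rho_m$, and carrying out this inversion reproduces the threshold displayed in~(\ref{equ: reform judging threshold}). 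This reduction is the crux of the argument, since it collapses the log-sum detector into a symmetric two-sided test on a single Gaussian-mixture variable.

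Next I would evaluate the tail probabilities by conditioning on the surrounding occupancy pattern $\bm q_{-m}$. The excerpt establishes that, given $\bm q_{-m}$, the variable $\mu_m$ is zero-mean Gaussian with variance $\epsilon_m^0(\bm q_{-m})$ under $\mathcal H_0$ and $\epsilon_m^1(\bm q_{-m})$ under $\mathcal H_1$. For a zero-mean Gaussian $X\sim\mathcal N(0,\sigma^2)$ one has $\Pr\{X^2\le t\}=\Pr\{|X|\le\sqrt t\}=\mathrm{erf}\!\big(\sqrt{t/(2\sigma^2)}\big)$, so conditionally $\Pr\{\mu_m^2\le\hat\rho_m\mid\bm q_{-m},\mathcal H_i\}=\mathrm{erf}\!\big(\sqrt{\hat\rho_m/(2\epsilon_m^i(\bm q_{-m}))}\big)$ for $i=0,1$. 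Averaging over $\bm q_{-m}$ with the weights $P_m(\bm q_{-m})$ through the law of total probability yields~(\ref{equ: condi prob prop 2}) directly for $\mathcal H_1$, while taking the complement of the corresponding $\mathcal H_0$ expression produces the $1-\sum(\cdots)$ form in~(\ref{equ: condi prob prop 1}).

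I expect the main obstacle to be handling the mixture structure correctly rather than any single computation: $\mu_m$ is itself not Gaussian but a $\mathcal Q_{-m}$-indexed mixture of Gaussians, so the error-function identity may be applied only after conditioning on $\bm q_{-m}$, and one must ensure it is the component variance $\epsilon_m^i(\bm q_{-m})$ — not an averaged variance — that enters the argument of $\mathrm{erf}$. Two routine points to verify along the way are the $\sqrt2$ factor in the error-function representation of the two-sided Gaussian probability, and the positivity of $\hat\rho_m$ (which follows from $A_m>0$ together with $\rho_m$ lying in the admissible range stated after~(\ref{equ: judging function original})), so that $\sqrt{\hat\rho_m}$ is real and the identity is valid.
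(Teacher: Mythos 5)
Your argument is correct and follows essentially the same route as the paper's own proof: invert the affine (in $\mu_m^2$) detection statistic to obtain the equivalent threshold test $\mu_m^2 \gtrless \hat{\rho}_m$, then condition on $\bm q_{-m}$ so that $\mu_m^2$ is a scaled chi-squared variable with one degree of freedom and its conditional CDF is an error function, and finally average over $\bm q_{-m}$ with weights $P_m(\bm q_{-m})$. You in fact supply more detail than the paper (the explicit positivity of the slope $A_m$ and the $\mathrm{erf}$ identity), and your observations about applying the Gaussian identity only per mixture component are exactly the right cautions.
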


\begin{proof}
Based on~(\ref{equ: judging function original}), the judging condition $\tau_m \LRT{\mathcal H_1}{\mathcal H_0} \rho_m$ is equivalent to $\mu_m^2 \LRT{\mathcal H_1}{\mathcal H_0} \hat{\rho}_m$.
Therefore, $\Pr\{\mu_m^2 > \hat{\rho}_m | \mathcal H_0\} = \Pr\{ \tau_m > \rho_m | \mathcal H_0\}$ 
and $\Pr\{\mu_m^2 \leq \hat{\rho}_m | \mathcal H_1\} = \Pr\{ \tau_m \leq \rho_m |\mathcal H_1 \}$.
Also, given $\bm q_{-m}$, $\mu_m^2$ follows a chi-squared distribution with one degree of freedom.
Therefore, the cumulative distribution function of $\mu_m^2$ is a weighted sum of error functions, and thus the conditional probabilities can be calculated by using~(\ref{equ: condi prob prop 1}) and~(\ref{equ: condi prob prop 2}).
\end{proof}

Besides, we can observe in~(\ref{equ: analytical CE loss of m}) that $L_m$ is determined by the judgment threshold $\rho_m$.
Then, based on~(\ref{equ: analytical CE loss of m}) to~(\ref{equ: reform judging  threshold}), $\partial L_m/\partial \rho_m$ can be calculated as
\beq
\label{equ: partial dev}
\partial L_m/\partial \rho_m = 
    -\frac{2C_{\mathrm{In0}}}{\sqrt{\pi}}
        \cdot
    \frac{\partial \hat{\rho}_m}{\partial \rho_m}
        \cdot
    \sum_{\bm q_{-m}\in\mathcal Q_{-m}}
            P_{m}(\bm q_{-m}) 
                \cdot
			\phi_m(\bm q_{-m}),
\eeq
\begin{align}
\phi_m(\bm q_{-m}) \! = \!
                \frac{(1\!-\! \tilde{p}_{m})\!\cdot \!e^{-\hat{\rho}_{m}/2\epsilon_m^0(\bm q_{-m}) }}{\sqrt{8\epsilon_m^0(\bm q_{-m})\hat{\rho}_m}}
                \! -\!
                \frac{\tilde{p}_{m}\cdot e^{-\hat{\rho}_{m}/2\epsilon_m^1(\bm q_{-m})} }{\sqrt{8\epsilon_m^1(\bm q_{-m})\hat{\rho}_m}}.
\end{align}
Then, the optimal $\rho_m^*$ can be obtained by solving $\partial L_m/\partial \rho_m = 0$.
Denoting the minimal $L_m$ corresponding to $\rho_m^*$ as $L_m^*$, the upper bound for the cross-entropy loss in~(\ref{equ: def of sensing loss}) can be calculated~as
\beq
L_{\mathrm{ub}} = \sum_{m\in\mathcal M} L_{m}^*.
\eeq

When the emptiness of the space grids other than the $m$-th is given, the upper bound of the cross-entropy loss can be calculated from Proposition~\ref{prop: opt judging threshold single q}.
Since the sensing accuracy is inversely proportional to the cross-entropy loss, a lower-bound for the sensing accuracy is derived. 

\begin{proposition}
\label{prop: opt judging threshold single q}
When the emptiness of the space grids other than the $m$-th is given, i.e., $\mathcal Q_{-m} = \{ \bm q_{-m} \}$%
, the optimal judging threshold for the $m$-th space grid is
\begin{align}
\label{equ: zero point}
\rho_m^*(\bm q_{-m}) \!=\! 
\begin{cases}
    \frac{1}{2}\ln(\frac{\epsilon_m^0(\bm q_{-m})}{\epsilon_m^1(\bm q_{-m})}), ~\text{if $\tilde{p}_{m}\!>\!{\sqrt{\epsilon_m^1(\bm q_{-m})\over \epsilon_m^0(\bm q_{-m})+\epsilon_m^1(\bm q_{-m})}}$}, \\
    2\ln(\frac{1 - \tilde{p}_{m}}{\tilde{p}_{m}}) - \frac{1}{2}\ln(\frac{\epsilon_m^0(\bm q_{-m})}{\epsilon_m^1(\bm q_{-m})}), ~\text{otherwise}.
\end{cases}
\end{align}
\end{proposition}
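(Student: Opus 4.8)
The plan is to treat the single-configuration case $\mathcal Q_{-m}=\{\bm q_{-m}\}$ as a one-dimensional constrained minimization of the loss $L_m$ in~(\ref{equ: analytical CE loss of m}) over the threshold $\rho_m$. First I would substitute the closed forms of Proposition~\ref{prop: threshold judging}: since the sums in~(\ref{equ: condi prob prop 1}) and~(\ref{equ: condi prob prop 2}) collapse to a single term with $P_m(\bm q_{-m})=1$, the loss reduces to $L_m = C_{\mathrm{In0}}(1-\tilde p_m)\big(1-\mathrm{erf}\big(\sqrt{\hat\rho_m/2\epsilon^0_m}\big)\big) + C_{\mathrm{In0}}\tilde p_m\,\mathrm{erf}\big(\sqrt{\hat\rho_m/2\epsilon^1_m}\big)$, a function of the single auxiliary variable $\hat\rho_m$ defined in~(\ref{equ: reform judging  threshold}). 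Because $\mu_m^2>0$ almost surely, the admissible range is $\hat\rho_m\in[0,\infty)$, and through~(\ref{equ: reform judging  threshold}) this corresponds exactly to $\rho_m\geq \tfrac12\ln(\epsilon^0_m/\epsilon^1_m)$, so the boundary of the feasible set is already the candidate appearing in the first branch of the statement.

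Next I would locate the interior stationary point. Using the derivative already computed in~(\ref{equ: partial dev}) and noting that $\partial\hat\rho_m/\partial\rho_m=\epsilon^0_m\epsilon^1_m/(\epsilon^1_m-\epsilon^0_m)$ is a strictly positive constant (recall $\epsilon^1_m=\epsilon^0_m+\epsilon_{\reflect,m}(\|\bm\xi_{R,m}\|^2+\|\bm\xi_{I,m}\|^2)>\epsilon^0_m$ by~(\ref{equ: var when nonempty})), the condition $\partial L_m/\partial\rho_m=0$ is equivalent to $\phi_m(\bm q_{-m})=0$. Setting the two exponential terms of $\phi_m$ equal and taking logarithms turns this into a linear equation in $\hat\rho_m$; solving it and converting back via~(\ref{equ: reform judging  threshold}) yields $\rho_m^*=2\ln\frac{1-\tilde p_m}{\tilde p_m}-\tfrac12\ln\frac{\epsilon^0_m}{\epsilon^1_m}$, which is precisely the second (``otherwise'') branch.

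Finally I would decide by a sign argument which candidate is the constrained minimizer. Factoring $\phi_m=(8\hat\rho_m)^{-1/2}g(\hat\rho_m)$ with $g(\hat\rho_m)=(1-\tilde p_m)e^{-\hat\rho_m/2\epsilon^0_m}/\sqrt{\epsilon^0_m}-\tilde p_m e^{-\hat\rho_m/2\epsilon^1_m}/\sqrt{\epsilon^1_m}$, log-linearity shows $g$ has at most one zero on $(0,\infty)$, and since $\epsilon^1_m>\epsilon^0_m$ forces $g<0$ for large $\hat\rho_m$, the sign of $g$ at the feasible boundary $\hat\rho_m\to 0^+$ decides the outcome: if $g(0^+)>0$ the unconstrained root lies in $[0,\infty)$ and $L_m$ first decreases then increases, delivering the interior optimum, whereas if $g(0^+)\leq 0$ then $\partial L_m/\partial\rho_m>0$ throughout and the minimizer is pushed to the boundary $\rho_m^*=\tfrac12\ln(\epsilon^0_m/\epsilon^1_m)$. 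Translating $g(0^+)\gtrless 0$ into a condition on $\tilde p_m$ produces the threshold comparing $\tilde p_m$ against a ratio of $\epsilon^0_m$ and $\epsilon^1_m$ that separates the two cases. I expect the main obstacle to be exactly this boundary bookkeeping: one must confirm that the unique stationary point is a minimum rather than a maximum and handle the $1/\sqrt{\hat\rho_m}$ singularity of $\phi_m$ at $\hat\rho_m=0$, so that the case split is driven cleanly by whether the unconstrained root is feasible --- a standard KKT argument, but one whose algebra must be carried out carefully to land on the stated threshold.
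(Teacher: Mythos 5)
Your proposal is correct and follows essentially the same route as the paper's proof: both reduce the problem to the sign of $\phi_m(\bm q_{-m})$ (you via the difference $g$, the paper via the ratio $\iota_m$ of its two terms), solve $\phi_m=0$ for the interior stationary point, and fall back to the boundary $\rho_m=\tfrac{1}{2}\ln(\epsilon_m^0/\epsilon_m^1)$ when that root violates the constraint $\hat\rho_m\geq 0$. If anything you are more explicit than the paper about the boundary/KKT bookkeeping and about why the unique stationary point is a minimum; note that carrying your final algebra through gives the feasibility threshold $\tilde p_m>\sqrt{\epsilon_m^1}/(\sqrt{\epsilon_m^0}+\sqrt{\epsilon_m^1})$, which differs slightly from the constant $\sqrt{\epsilon_m^1/(\epsilon_m^0+\epsilon_m^1)}$ printed in the proposition.
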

\begin{proof}
The sign of $\partial L_m/\partial p_m$ is determined by  $\phi_m(\bm q_{-m})$.
We calculate the ratio between the two terms of $\phi_m(\bm q_{-m})$, which can be expressed as
\beq
\iota_m(\bm q_{-m}) = 
    \frac{1-\tilde{p}_{m}}{\tilde{p}_{m}}
        \cdot
    \sqrt{\frac{\epsilon_m^1(\bm q_{-m})}{\epsilon_m^0(\bm q_{-m})}}
        \cdot
    e^{ -\hat{\rho}_m 
            \cdot 
        \frac{ \epsilon^1_{m}(\bm q_{-m}) - \epsilon^0_{m}(\bm q_{-m})  }{ 2\epsilon_m^0(\bm q_{-m})\cdot \epsilon_m^1(\bm q_{-m}) }
    }.
\eeq
Since $\epsilon_m^1(\bm q_{-m})>\epsilon_m^0(\bm q_{-m})$ and $\hat{\rho}_m\propto\rho_m$, $\iota_m(\bm q_{-m})$ is a monotonic decreasing function with respect to $\rho_m$ and $\iota_m(\bm q_{-m})\geq 0$.
Also,
$\phi_m(\bm q_{-m}) \geq 0 \iff \iota_m(\bm q_{-m}) \geq 1$, and thus, $\partial L_m/\partial p_m \geq 0$ if and only if $\iota_m(\bm q_{-m})\geq 1$.
Therefore, the minimal $H_m$ is obtained when $\rho_m$ satisfies the condition $\iota_m(\bm q_{-m}) = 1$.
Then, we can prove Proposition~\ref{prop: opt judging threshold single q} by solving $\iota_m(\bm q_{-m}) = 1$ and considering that $\rho_m \geq -\frac{1}{2}\ln(\frac{\epsilon_m^1(\bm q_{-m})}{\epsilon_m^0(\bm q_{-m})})$.
\end{proof}

However, since the number of possible $\bm q_{-m}$ can be large,~(typically, $|\mathcal Q_{-m}| = 2^{\bm M-1}$), calculating the exact $\partial L_m/\partial \rho_m$ in~(\ref{equ: partial dev}) is time-consuming, which makes it hard to find the exact $\rho_m^*$ and $L_m^*$.
Therefore, in practice, we approximate $H_{\mathrm{ub}}$ by using a random sampled subset of $\mathcal Q_{-m}^{\mathrm{sam}}$, which is denoted by $\mathcal Q_{-m}^{\mathrm{sam}} \subset \mathcal Q_{-m}$.

Moreover, since the sign of $\partial L_m/\partial \rho_m$ is determined by the sum of $\phi_m(\bm q_{-m})$, and $\phi_m(\bm q_{-m})$ has a zero point, which can be calculated by~(\ref{equ: zero point}).
If $\rho_m$ is less than the zero point of $\phi_m(\bm q_{-m})$, $\phi_m(\bm q_{-m})\geq 0$; and otherwise $\phi_m(\bm q_{-m}) < 0$.
Therefore, we use the mean of the optimal $\rho_m^*(\bm q_{-m})$ for each $\bm q_{-m} \in \mathcal Q_{-m}^{\mathrm{sam}}$ to estimate $\rho_m^*$, and approximate the upper-bound accordingly.
The estimated $\rho_m^*$ is denoted by $\tilde{\rho}_m^*$, which can be formulated as follows
\beq
\label{equ: final derived upper-bound}
\tilde{\rho}_m^* = 
	\frac{1}{|\mathcal Q_{-m}^{\mathrm{sam}}|} 
	\sum_{\bm q_{-m}\in\mathcal Q_{-m}^{\mathrm{sam}}} 
		\rho_m^*(\bm q_{-m}),
\eeq
where $\rho_m^*(\bm q_{-m})$ can be obtained by Proposition~\ref{prop: opt judging threshold single q}.
When $|\mathcal Q_{-m}^{\mathrm{sam}}|$ is large enough, $\tilde{\rho}_m^*$ in~(\ref{equ: final derived upper-bound}) can approximate $\rho_m^*$.

Finally, given the approximated upper bound of the cross-entropy loss as $\tilde{L}_{\mathrm{ub}}$, then it can be observed from~(\ref{equ: analytical CE loss of m}) that the upper bound of average probability of sensing error for a space grid is ${P}_{\mathrm{err, ub}} = \tilde{L}_{\mathrm{ub}} / C_{\mathrm{In0}}$.
Therefore, the lower bound of the average sensing accuracy for a space grid is ${P}_{\mathrm{acc, lb}} =1 - {P}_{\mathrm{err, ub}} $.

\section{Simulation and Evaluation}
\label{sec: simulation result}

In this section, we first describe the setting of the simulation scenario and summarize the simulation parameters.
Then, we provide simulation results to verify the effectiveness of the proposed PRPG algorithm.
Finally, using the proposed algorithm, we evaluate the cross-entropy loss of the metasurface assisted RF sensing scenario with respect to different numbers of sizes of the metasurface, and numbers of space grids.
\revised{Besides, we also compare the proposed method with the benchmark, i.e., the MIMO RF sensing systems.}

\subsection{Simulation Settings}
\label{ssec: simulation setting}
\begin{figure}[!t] 
	\center{\includegraphics[width=0.5\linewidth]{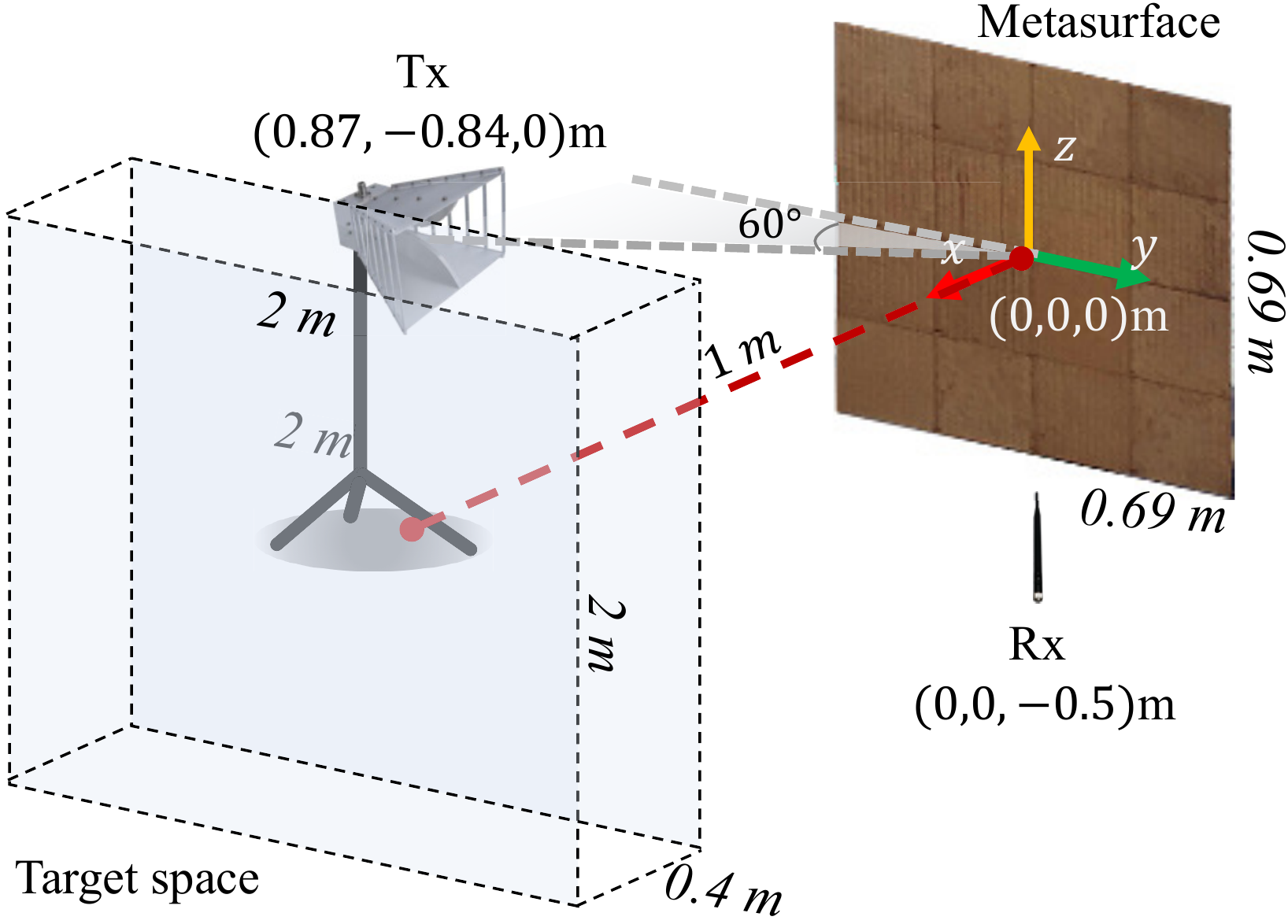}}
	\caption{Simulation layout.}
	\setlength{\belowcaptionskip}{-2em} 
		\vspace{-1.em}
	\label{fig: environment setup}
\end{figure}

The layout of the considered scenario is provided in Fig.~\ref{fig: environment setup}.
The metasurface adopted in this paper is the same as the one used in~\cite{our_ris_work}, and the reflection coefficients of the {\mselem} in different {\newstate}s are simulated in CST software, Microwave Studio, Transient Simulation Package~\cite{Hirtenfelder2007Effective}, by assuming $60^{\circ}$ incident RF signals with vertical polarization.
Besides, to increase the reflected signal power in the simulation, we combine $G$ {\mselem}s as an independently controllable group.
The {\mselem}s of an independently controllable group are in the same {\newstate}, and thus they can be considered as a single one.
Therefore, the proposed algorithm is suitable for this case. 
The number of independently controllable group is denoted by $N_G$.

The origin of the coordinate is at the center of the metasurface, and the metasurface is in the $y$-$z$ plane.
In addition, the $z$-axis is vertical to the ground and pointing upwards, and the $x$- and $y$-axes are parallel to the ground.
The Tx and Rx antennas are located at $(0.87, -0.84, 0)$~m and $(0,0,-0.5)$~m, respectively.
The target space a cuboid region located at $1$~m from the metasurface, and is divided into $M$ space blocks each with size $0.1\times 0.1\times 0.1$~m$^3$.
The simulation parameters are summarized in Table~\ref{table: simul and experi parameters}.

\begin{table}
\centering
	\caption{Simulation Parameters.}	\label{table: simulation parameters}
	\vspace{-1em}
\includegraphics[width=0.5\linewidth]{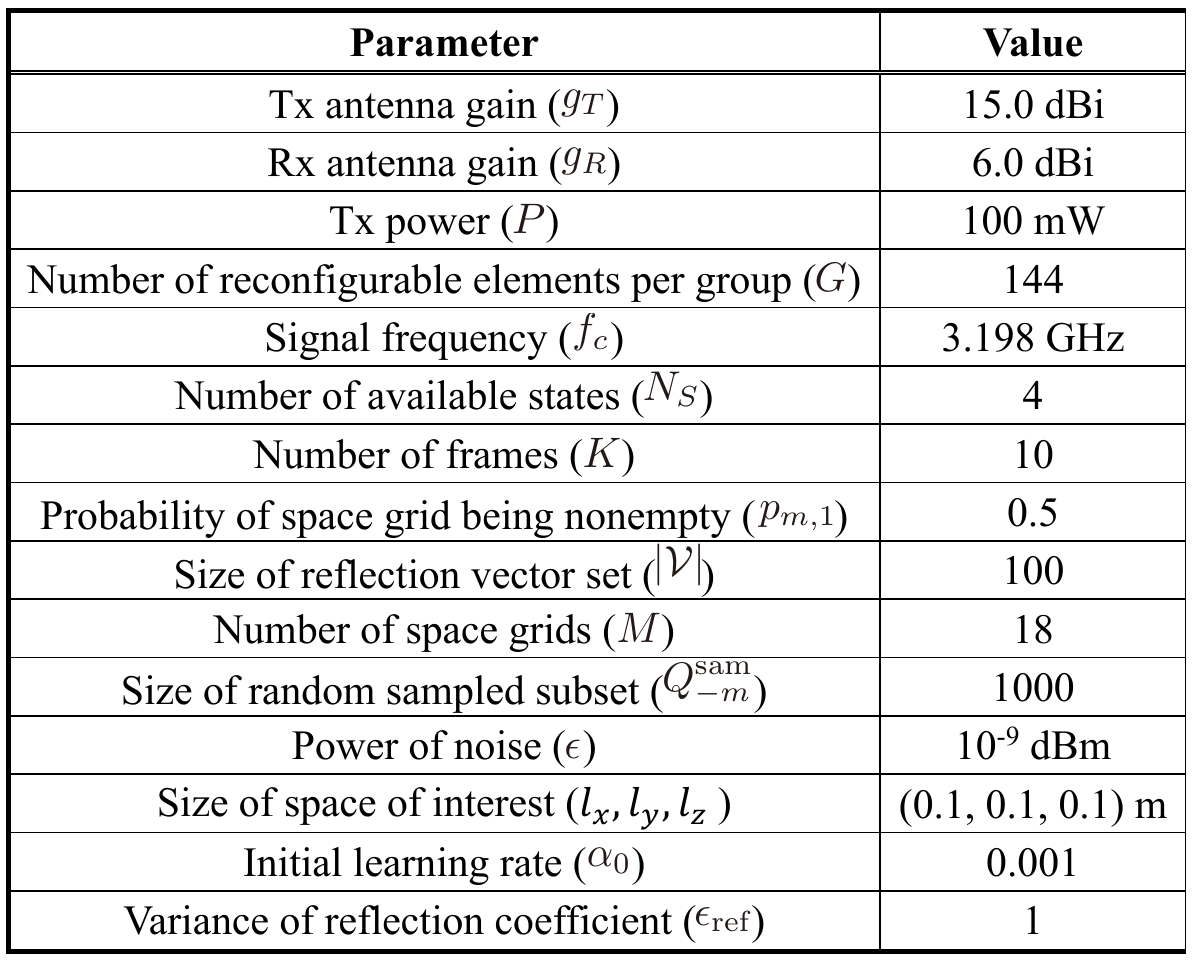}
\label{table: simul and experi parameters}
\end{table}

\begin{figure}[!t] 
	\center{\includegraphics[width=0.5\linewidth]{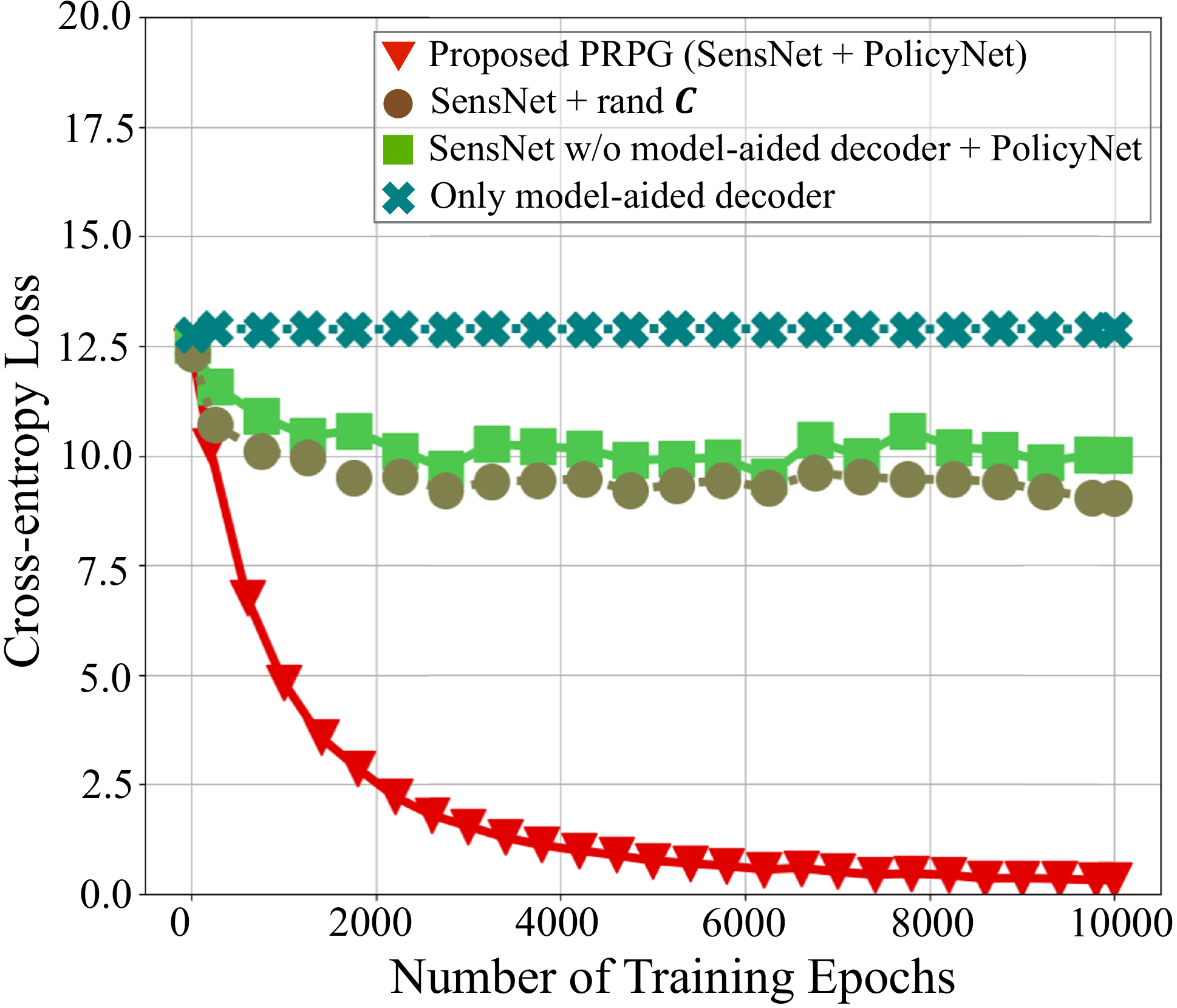}}
	\vspace{-0.5em}
	\setlength{\belowcaptionskip}{-1.2em}   
	\caption{Cross-entropy loss versus the number of training epochs for different algorithms.}
	\label{fig: algorithm compare cross-entropy loss}
\end{figure}

\subsection{Results}
\label{ssec: result}

In Fig.~\ref{fig: algorithm compare cross-entropy loss}, we compare the training results for different algorithms.
Specifically, the first algorithm in the legend is the proposed PRPG algorithm where a sensing network~(SensNet) and a policy network~(PolicyNet) are adopted.
The second algorithm adopts a sensing network but adopt a random {\configmat}.
The third algorithm adopts both a sensing network and a policy network, but the sensing network does not contain a model-aided decoder as in the proposed algorithm.
The fourth algorithm only uses the model-aided decoder to map the received signals to the sensing results.

It can be observed that the proposed PRPG algorithm converges with high speed and it results in the lowest cross-entropy loss among all the considered algorithms.
In particular, Fig.~\ref{fig: sensing result versus training episodes} shows a ground-truth object and the corresponding sensing results versus the number of training epochs.
As the number of training epochs increases, the sensing result approaches the ground truth and becomes approximately the same after $10^4$ training epochs.

\begin{figure}[!t] 
	\center{\includegraphics[width=1\linewidth]{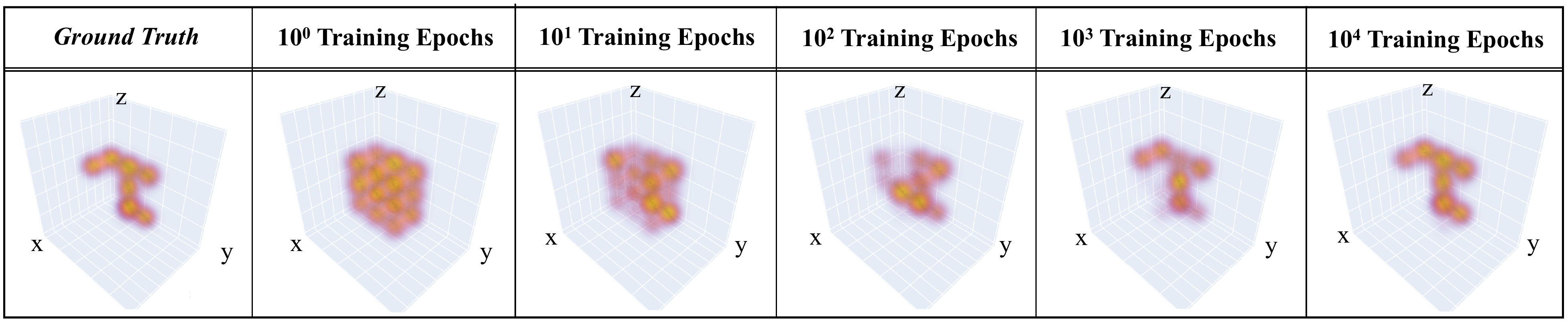}}
	\vspace{-0.6em}
	\setlength{\belowcaptionskip}{-1.2em}   
	\caption{Illustrations of ground-truth and the sensing results of different training epochs for a target object.}
	\vspace{-1em}
	\label{fig: sensing result versus training episodes}
\end{figure}

\begin{figure}[!t] 
	\center{\includegraphics[width=0.95\linewidth]{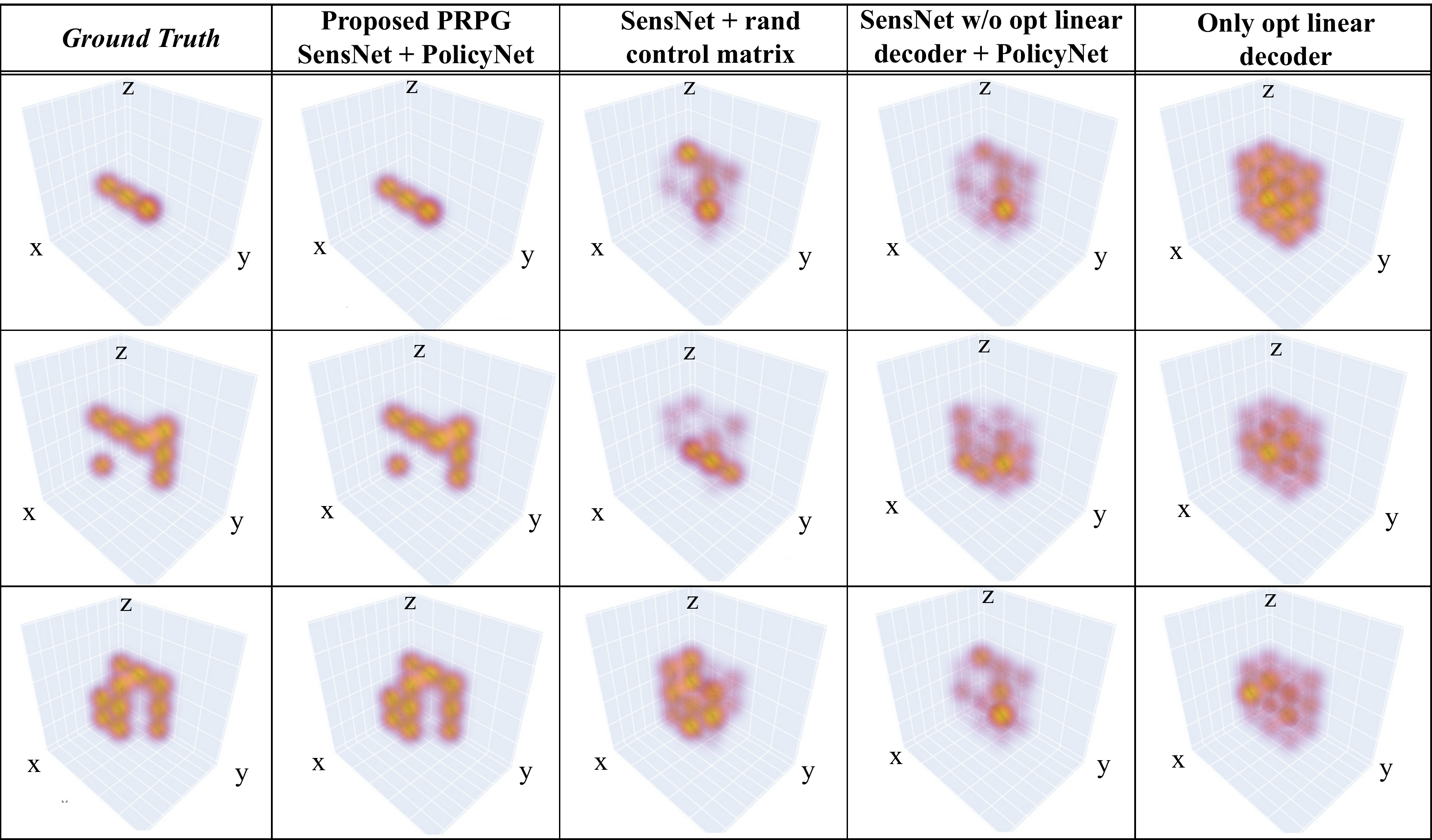}}
	\vspace{-0.5em}
	\setlength{\belowcaptionskip}{-1.2em}   
	\caption{\revised{Illustrations of the ground-truths and the sensing results of objects with different shapes for different algorithms.}}
	\label{fig: image result comparing for different algs and shapes.}
\end{figure}

\begin{figure}[!t] 
	\center{\includegraphics[width=0.5\linewidth]{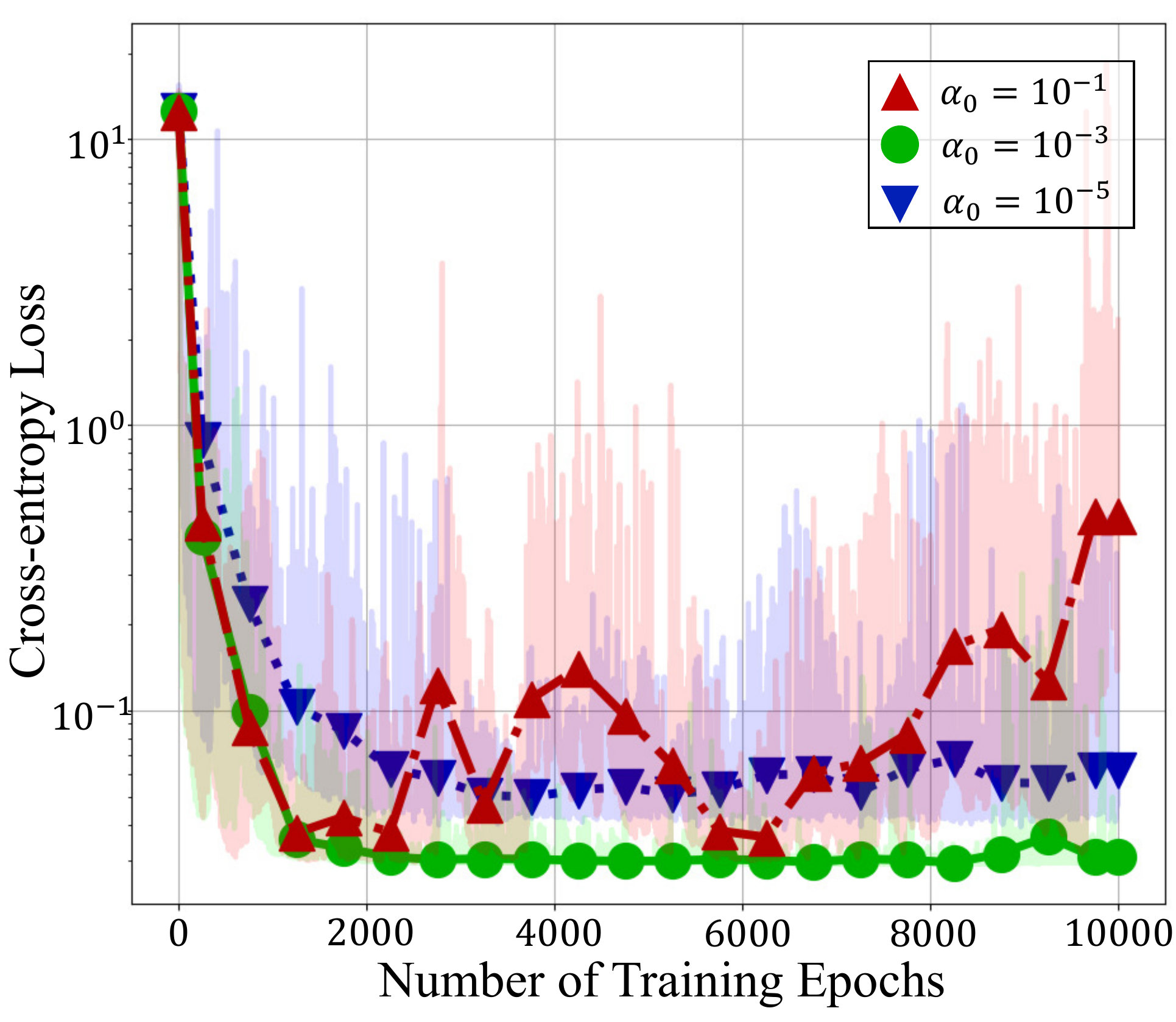}}
	\vspace{-0.5em}
	\setlength{\belowcaptionskip}{-1.2em}   
	\caption{Cross-entropy loss of the {\sensfunc} in high, normal, and low learning rate cases.}
 \vspace{-1em}
	\label{fig: lr compare}
\end{figure}

\revised{In Fig.~\ref{fig: image result comparing for different algs and shapes.}, it shows the ground-truths and the sensing results for different algorithms and the target objects with different shapes.
Comparing the sensing results with the ground truths, we can observe that the proposed algorithm outperforms other benchmark algorithms to a large extent.
Besides, by comparing the sensing results of the proposed algorithm in the second column with the ground truths in the first column, we can observe that the proposed algorithm obtains the accurate sensing results despite the different shapes of the target objects.}''

In Fig.~\ref{fig: lr compare}, the training results versus the number of training epochs for the PRPG algorithm are given and compared for different learning rates are compared.
The initial learning rates in each case are set to be $\alpha_0 = 10^{-1}$, $10^{-3}$, $10^{-5}$, which then decrease inversely as the number of training epochs increases.
It can be observed that large values of the learning rates prevent the algorithm to converge, while low values of the learning rates result in a slow decrease of the cross entropy loss.
The setup $\alpha_0=10^{-3}$ outperforms the others, which verifies our learning rate selection.

\begin{figure}[!t] 
	\center{\includegraphics[width=0.58\linewidth]{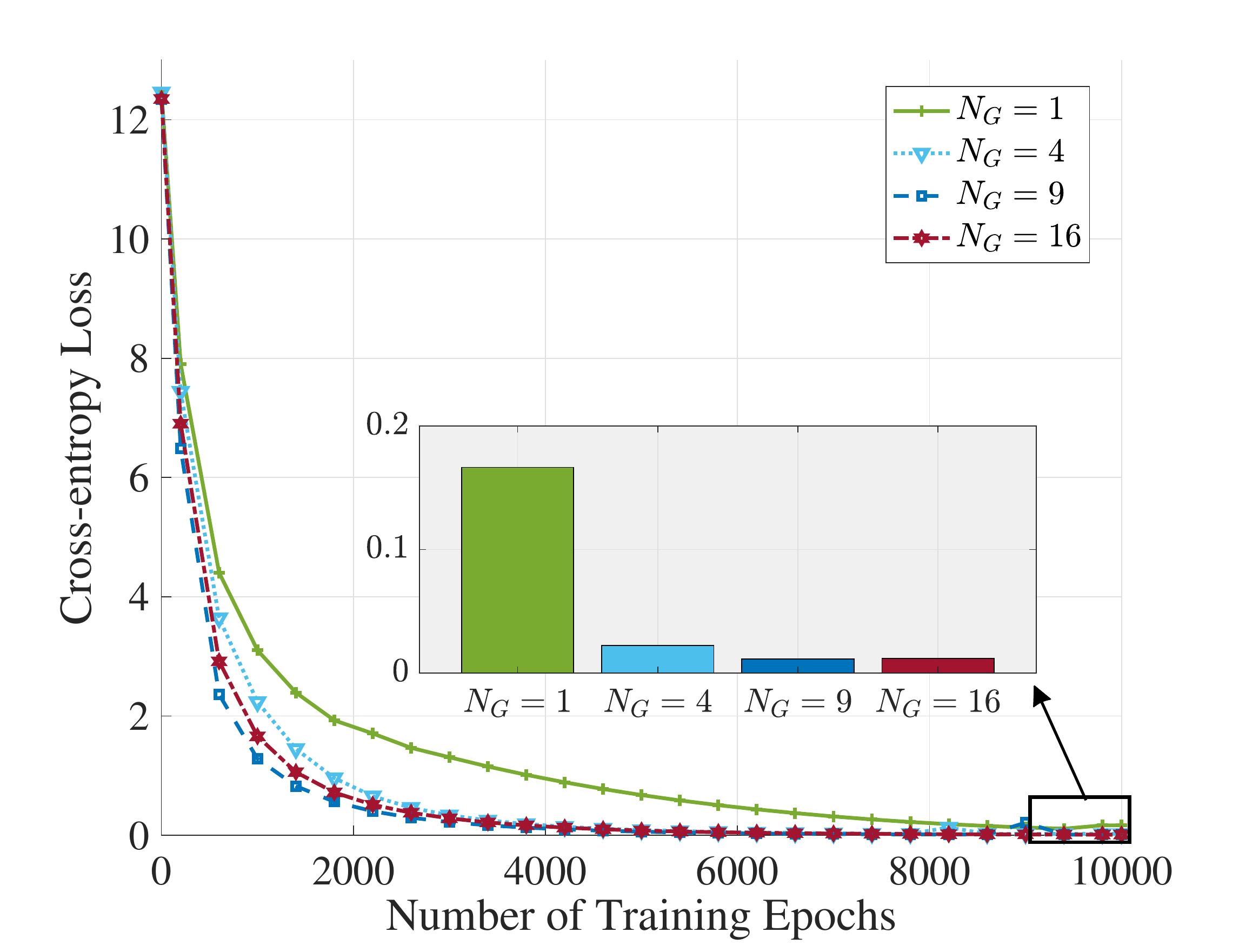}}
	\vspace{-0.5   em}
	\setlength{\belowcaptionskip}{-1.2em}   
	\caption{\revised{Cross-entropy loss versus the number of training epochs for different sizes of the metasurface.}}
\vspace{-1em}
	\label{fig: size compare}
\end{figure}

\revised{In Fig.~\ref{fig: size compare}, it can be observed that as the size of the metasurface, i.e., $N_G$, increases, the result cross-entropy loss after training decreases.
This is because the received energy can be improved with more reconfigurable elements to reflect transmitted signals, as indicated by~(\ref{equ: received signal matrix form}).
Besides, more {\mselem}s create a larger design freedom and higher controllability of the beamforming, which makes gains of these reflection paths via different space grids more distinguishable.
Therefore, objects at different space grids can be sensed with a higher precision.
However, the cross-entropy cannot be reduced infinitely. 
When $N_G$ is sufficiently large, the cross-entropy will remains stable.
As shown in Fig.~\ref{fig: size compare}, the cross-entropy loss results for $N_G=9$ and $N_G=16$ are almost the same.
Besides, comparing the curves for $N_G=9$ and $N_G=16$ within the first $2000$ training epochs, we can observe that increasing the number of {\mselem}s when $N_G\geq 9$ has a negative impact on the training speed and convergence rate.
This is because increasing the number of {\mselem}s leads to a higher complexity of finding the optimal policy for the metasurface to determine its {\configmat}, since the policy network of the metasurface needs to handle a higher-dimensional state space.}

\begin{figure}[!t] 
	\center{\includegraphics[width=0.5\linewidth]{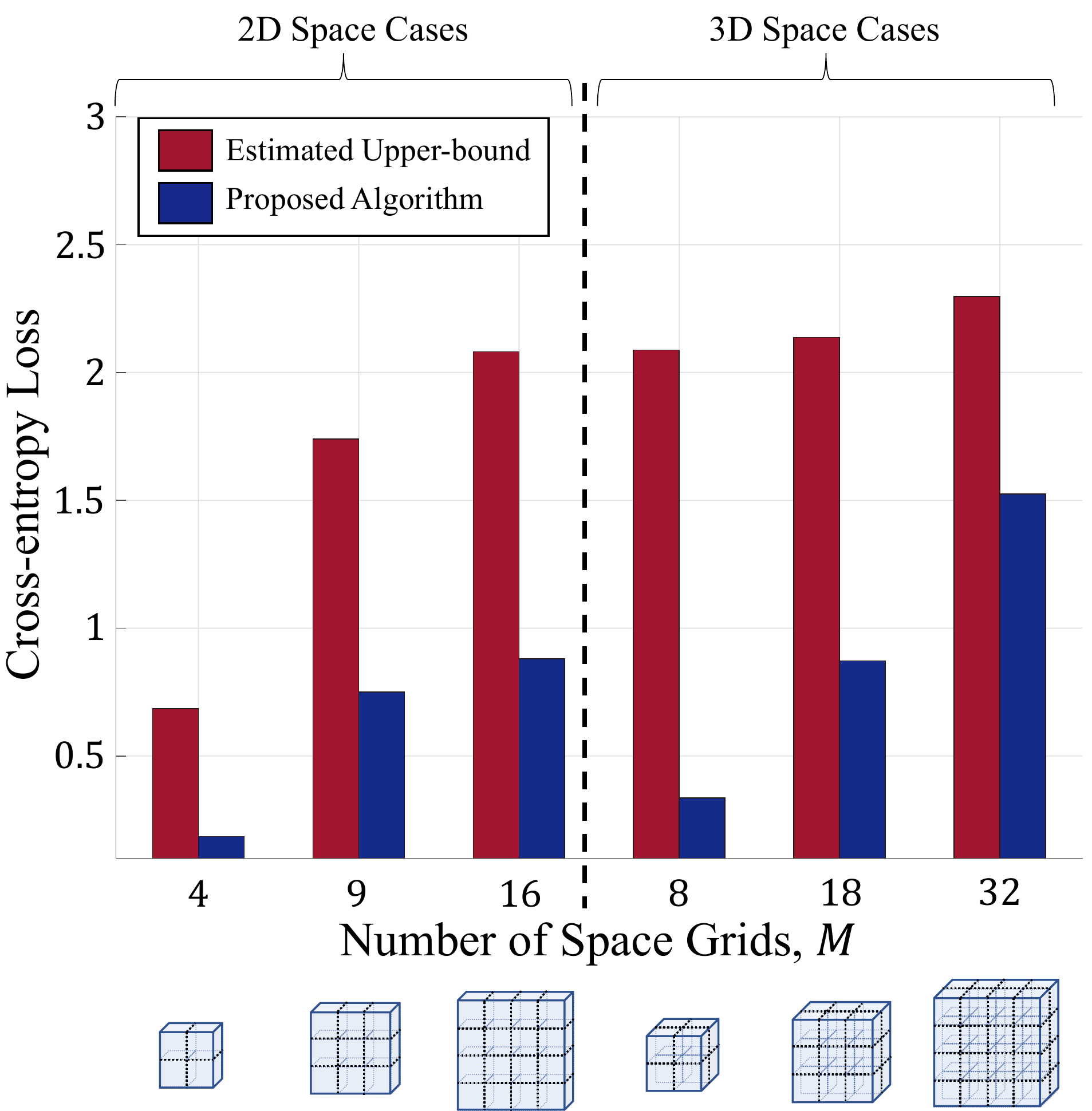}}
	\vspace{-0.5   em}
	\setlength{\belowcaptionskip}{-1.2em}   
	\caption{Estimated upper-bound and the results of the proposed algorithm for the cross-entropy loss versus different numbers of space grids in $2$D and $3$D scenarios. The drawings at the bottom indicate the arrangement of the space grids}
\vspace{-1em}
	\label{fig: theo and algorithm result compare}
\end{figure}

In Fig.~\ref{fig: theo and algorithm result compare}, we compare the theoretical upper-bound derived in~(\ref{equ: final derived upper-bound}) and the proposed PRPG algorithm for different values of $M$ in $2$D and $3$D scenarios.
It can be observed that, in both $2$D and $3$D scenarios, the probability of sensing error increases with $M$.
Also, the cross-entropy loss in $3$D scenarios is higher than those for $2$D scenarios.
This is because the space grids in the $3$D scenarios are more closely spaced to each other, which make them hard to be distinguished.
Finally, it can be observed that, as $M$ increases, the cross-entropy loss of the proposed algorithm increases more quickly in 3D scenarios compared to that in $2$D scenarios.
This which verifies that $3$D sensing is more difficult than $2$D sensing.

\begin{figure}[!t] 
	\center{\includegraphics[width=0.55\linewidth]{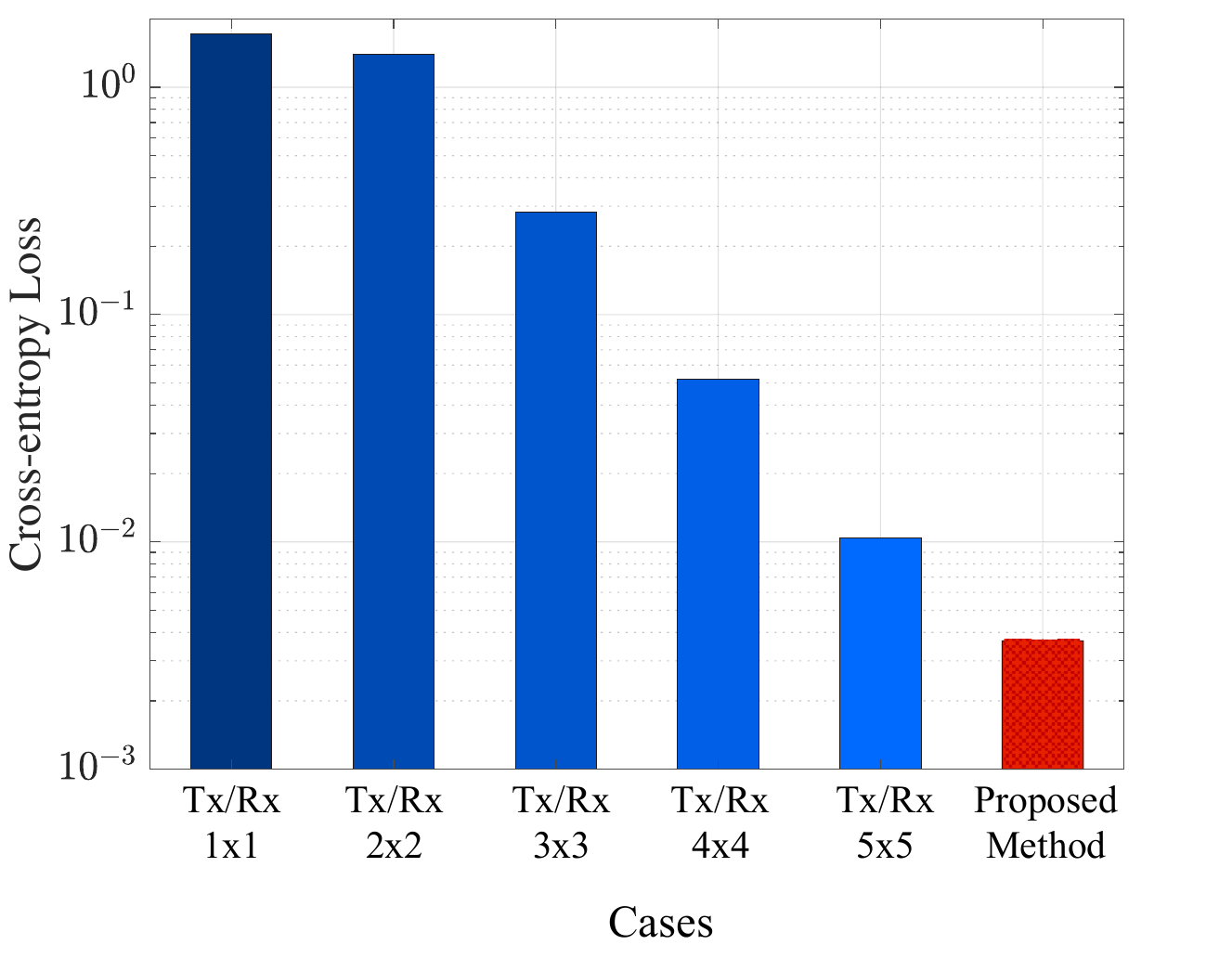}}
	\vspace{-0.5   em}
	\setlength{\belowcaptionskip}{-1.2em}   
	\caption{\revised{Comparison between the metasurface assisted scenario and the MIMO scenarios with different numbers of Tx/Rx antennas. The bars illustrate the results for different MIMO scenarios, and the dash lines depict the results of the metasurface assisted scenario with different numbers of frames.}}
\vspace{-1em}
	\label{fig: mimo compare}
\end{figure}

\revised{In Fig.~\ref{fig: mimo compare}, we show the comparison between the proposed metasurface assisted scenario and the benchmark, which is the MIMO RF sensing scenarios with no metasurface.
Both the metasurface assisted scenario and the MIMO scenarios adopted a similar layout described in Section~\ref{ssec: simulation setting}, and the result cross-entropy loss is obtained by Algorithm~\ref{alg: proposed algorithm}.
Nevertheless, in the MIMO sensing scenarios, a static reflection surface takes the place of the metasurface, which cannot change the {\config} for the reflection signals.
When the size of the MIMO array in Fig.~\ref{fig: mimo compare} is $n\times n$~($n=1,2,\dots,5$), it indicates that $n$ Tx antennas and $n$ Rx antennas are adopted in the scenario.
Specifically, the Tx/Rx antennas are arranged along the $y$-axis with a space interval of $0.1$ m.
The Tx antennas transmit continuous signals with phase interval $2\pi/n$, and the $n$ received signals of the Rx antennas with suppressed LoS signals are used as the measurement vector.}
\revised{Comparing the MIMO benchmarks and the proposed method, we can observe that the proposed method outperforms the $(1\times 1)\sim(5\times 5)$ MIMO sensing scenarios in terms of the resulted cross-entropy loss. This shows the significance of using metasurface to assisted RF sensing.}

%

\section{Conclusion}
\label{sec: conclu}
In this paper, we have considered a metasurface assisted RF sensing scenario, where the existence and locations of objects within a $3$D target space can be sensed.
To facilitate the {\config} design, we have proposed a frame-based RF sensing protocol.
Based on the proposed protocol, we have formulated an optimization problem to design of the {\config} and the {\sensfunc}.
To solve the optimization problem, we have formulated an MDP framework and have proposed a deep reinforcement learning algorithm named PRPG to solve it.
Also, we have analyzed the computational complexity and the convergence of the proposed algorithm and have computed a theoretical upper-bound for the cross-entropy loss given the {\config}s of the metasurface.
Simulation results have verified that the proposed algorithm outperforms other benchmark algorithms in terms of training speed and the resulted cross-entropy loss. 
Besides, they have also illustrated the influence of the sizes of the metasurface and the target space on the cross-entropy loss, which provides insights on the implementation of practical metasurface assisted RF sensing systems.

\begin{appendices}
\section{Proof of Theorem \ref{theo: action determine complexity}}
\label{appx: theorem 1}

The complexities of embedding $K$ and $N$ as one-hot vectors is $\bigO(K)$ and $\bigO(N)$, respectively.
Based on~\cite{skiena2012sorting}, the complexity of multiplex the $K$ {\config}s that are $N\cdot N_S$-dimensional by $\bm A\in\mathbb C^{NN_S\times M}$ is $\bigO(K\cdot N\cdot N_S\cdot M)$.
For a fully connected neural network with a fixed number of hidden layers and neurons, the computational complexity of the back-propagation algorithm is proportional to the product of the input size and the output size~\cite{sipper1993serial}.
Therefore, the computational complexities of the symmetric MLP group and the Q-value MLP are $\bigO(K\cdot M)$ and $\bigO(K\cdot N_S+N\cdot N_S)$, respectively.
As the connecting operation has complexity $\bigO(1)$ and finding the maximum Q-value is $\bigO(N_S)$, the total computational complexity has complexity $\bigO(KNN_SM)$ and is therefore dominated by the matrix multiplication.\proofend

\section{Proof of Lemma \ref{lemma: reward calculation}}
\label{appx: proof of lemma 1}
We consider the worst case scenario for the computation, i.e., the former states in all the samples are terminal states.
In this case, the rewards are calculated from~(\ref{equ: calculate r by monte carlo}). 
The term inside the second summation consists of two part, i.e., the cross-entropy calculation which has computational complexity $\bigO(M)$, and the calculation of $\hat{\bm p}$ by using the sensing network.
The computational complexity of calculating $\bm C \bm A$ is $\bigO(KNN_SM)$.

Based on~\cite{PETKOVIC2009270}, calculating the pseudo-inverse matrix $\nGamma^+$, where $\nGamma$ is a $K\times M$ matrix, has complexity $\bigO(K^2M)$.
Similar to the analysis of the computational complexity of MLPs in the proof of Theorem~\ref{theo: action determine complexity}, the computational complexity of the MLP is $\bigO(M^2)$.
Therefore, the computational complexity of calculating $\hat{\bm p}$ is $\bigO(KNN_SM+K^2M+M^2)$, which proves Lemma~\ref{lemma: reward calculation}.\proofend

\section{Proof of Lemma \ref{lemma: training calculation complexity}}
\label{appx: proof of lemma 2}
For a fully connected neural network with a fixed number of hidden layers and neurons, the computational complexity of the back-propagation algorithm is proportional to the product of the input size and the output size~\cite{sipper1993serial}.
Therefore, the computational complexity of using the back-propagation algorithm for updating the parameter vector of the sensing network is $\bigO(M^2)$.

The policy network can be considered as two connected MLPs: the first one takes the one-hot embedding vectors of $k$ and $n$ as the input, and the second one takes the $K$ measuring vectors with $2M$ dimensions as the input.
Moreover, as a symmetric MLP group is considered, the actual size of the input vector for the second MLP is $2M$ instead of $2KM$.
Therefore, the computational complexity of training the first and second MLP of the policy network are $\bigO(N_S\cdot(K+N))$ and $\bigO(N_SM)$, respectively, and the total computational complexity is thus $\bigO(N_S\cdot(K+N+M))$.

Furthermore, if a single large MLP with layer sizes $(2KM, 64, 32K)$ is used to substitute the symmetric MLP group, the computational complexities of training the second MLP is $\bigO(KN_SM)$, and the total computational complexity of training the policy network is $\bigO(KMN_S+NN_S)$.
Therefore, Lemma~\ref{lemma: training calculation complexity} is proved. \proofend

\section{Proof of Theorem \ref{theo: training complexity}}
\label{appx: proof of theorem 2}

Based on~(\ref{equ: update of theta}) and~(\ref{equ: sensing function loss}), the complexity of calculating the loss functions are determined by the computation of the reward and action probabilities.
From Theorem~\ref{theo: action determine complexity} and Lemma~\ref{lemma: training calculation complexity}, it follows that the complexity of calculating the reward is of higher order than that of calculating the action probabilities.
Therefore, the computational complexity of the training phase is dominated by the calculation of the $N_b$ rewards, which is $\bigO\!\left(KNN_SM+K^2M+M^2 \right)$. 
\proofend

\end{appendices}

\begin{footnotesize}
\bibliographystyle{IEEEtran}
\bibliography{ms}
\end{footnotesize}

\end{document}